\newcommand{\festl}{BB-STL}
\newcommand{\pastBSTL}{STL$^{\text{past}}_{\text{bdd}}$}
\newcommand{\Nlo}{\Nc} 
\newcommand{\Nneg}{\Nlo_-}
\newcommand{\Nmin}{\sqcap}
\newcommand{\Nmax}{\sqcup}
\newcommand{\Nonce}{\sqcup} 
\newcommand{\Nhist}{\sqcap}
\newcommand{\Fourier}[1]{\Fc\{#1\}}
\newcommand{\Omgn}{\vec{\Omega}}
\newcommand{\Taun}{\vec{\tau}}
\newcommand{\FilterSp}{\Dc_{1}'}  
\renewcommand{\sttraj}{x}
\renewcommand{\sstraj}{x}
\renewcommand{\outtraj}{y}
\renewcommand{\ztraj}{z}
\renewcommand{\TDom}{E}
\renewcommand{\SigSpace}{\Dc(\TDom)}
\renewcommand{\stSet}{\Xc}
\begin{document}
\title{Logical Signal Processing: a Fourier Analysis of Temporal Logic}
%
%
\author{Niraj Basnet\and
	Houssam Abbas}
\authorrunning{N. Basnet et al.}
%
\institute{Oregon State University, Corvallis OR 97330, USA
\email{\{basnetn,abbasho\}@oregonstate.edu}
}
\maketitle              
\begin{abstract}
What is the frequency content of temporal logic formulas? That is, when we monitor a signal against a formula, which frequency bands of the signal are relevant to the logic and should be preserved, and which can be safely discarded?
This question is relevant whenever signals are filtered or compressed before being monitored, which is almost always the case for analog signals.
To answer this question, we focus on monitors that measure the robustness of a signal relative to a specification in Signal Temporal Logic. 
We prove that robustness monitors can be modeled using Volterra series.
We then study the Fourier transforms of these Volterra representations, and provide a method to derive the Fourier transforms of entire formulas.
We also make explicit the measurement process in temporal logic and re-define it on the basis of distributions to make it compatible with measurements in signal processing.
Experiments illustrate these results.
Beyond compression, this work enables the integration of temporal logic monitoring into common signal processing tool chains as just another signal processing operation, and enables a common formalism to study both logical and non-logical operations in the frequency domain, which we refer to as Logical Signal Processing.
\keywords{Robustness monitoring \and Temporal logic \and Volterra series \and Fourier transform}
\end{abstract}

\section{Introduction: The Place of Runtime Verification in the Signal Processing Chain}
\label{sec:intro}

Runtime monitors in Cyber-Physical Systems (CPS) process analog signals: that is, continuous-time, continuous-valued signals generated by the physics of the system, rather than digital signals generated by computations on values stored in memory.
These analog signals are never pristine: to begin with, they are measured, and so incur some measurement distortion; 
they are noisy, and therefore are usually filtered to reduce noise;
and if they get transmitted, they are compressed and de-compressed, which introduces further distortions. 
All of these operations are very common in signal processing toolchains - indeed, the act of measurement is inevitable. 
And all of these operations affect, a priori, the verdict of the runtime monitor. 
Yet we have little theory to \textit{systematically} account for these effects. 
For instance, Fourier analysis is a standard powerful tool in signal processing, which is used to synthesize optimal filters meeting certain output requirements.
If we want to synthesize a filter subject to monitorability requirements, how would we go about it?
Conversely, how can we synthesize a monitor that accounts for the modifications introduced by a filter earlier in the processing chain?
Today we have no way of answering these questions systematically, because we lack an account of the frequency content of temporal logic. That is, we lack a Fourier representation of the operation of a temporal logic monitor.
Lacking such a uniform formalism, we remain in the awkward position of having to study the impact of frequency-domain operations in the time domain, using ad hoc assumptions like the availability of Lipschitz constants, and combining them somehow with the time-domain representation of monitors. 

This paper presents a way of analyzing the frequency content of temporal logic, or equivalently, of analyzing temporal logic monitors in the Fourier domain. 
Because monitors are nonlinear operators, we resort to Volterra series, which generalize the convolution representation of time-invariant linear operators to time-invariant nonlinear operators, and have generalized Fourier representations.
To apply the machinery of Volterra series, we work exclusively with 
robustness monitors, which output analog robustness signals.

As soon as we start thinking in the Fourier domain, we also realize that the basic measurement model in temporal logic is broken from a physical, and therefore, cyber-physical, perspective: in almost all temporal logics, it is assumed that one can make an \textit{instantaneous} measurement.
I.e., that it is possible to measure $x(t)$ exactly at $t$, and this is used to determine the truth value of an atomic proposition, e.g. `$x(t)\geq 0'$.
However, it is well-known that instantaneous measurements of analog signals are impossible!
Any measurement device has a finite resolution, so at best we can measure some aggregate of infinitely many signal values.
For instance, in a camera the value recorded by a pixel equals the \textit{average} illumination incident on that pixel, not the illumination in a specific point in space.
This matters to us because an instantaneous measurement requires infinite bandwidth, thus rendering the entire frequency analysis trivial or useless.
Instantaneous measurements also produce mathematical complications when working with sup norms and Lebesgue integrals which ignore sets of measure zero.
We therefore re-define atomic propositions (and the measurement model) on the basis of the theory of distributions, and demonstrate that the resulting robust semantics are still sound and still yield robustness tubes that can be used in falsification.

\begin{figure}[t]
	\centering
	\includegraphics[width=0.8\linewidth]{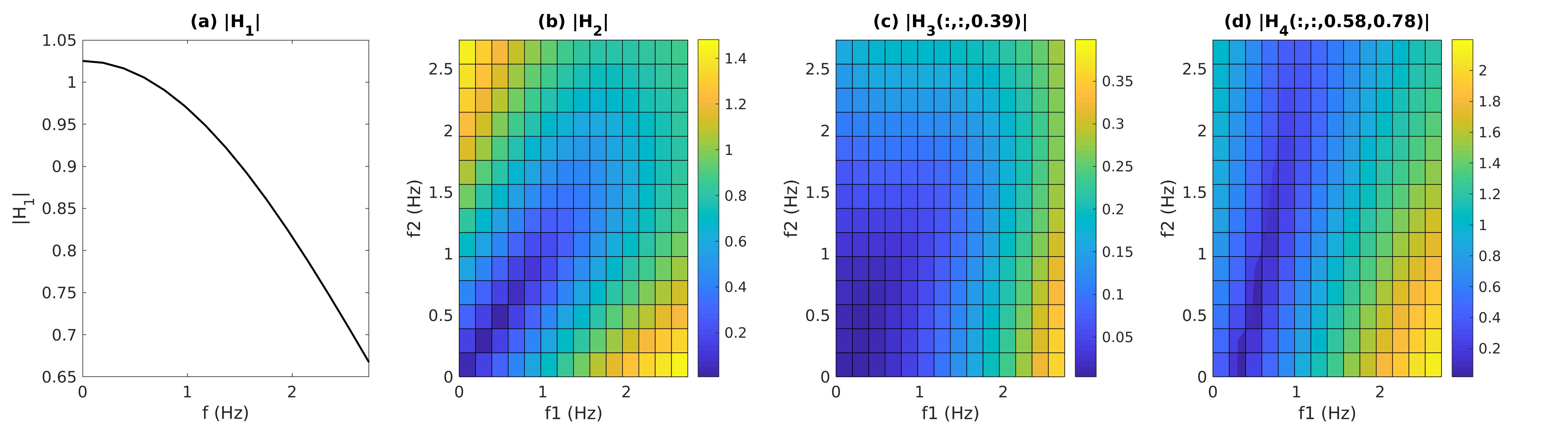}
	\caption{Frequency response of the temporal logic formula Historically $p$. $H_1(\omega)$ captures the linear, or first-order, transfer function, while $H_2(\omega_1,\omega_2)$, $H_3(\omega_1,\omega_2,\omega_3)$, etc, are higher-order generalized frequency responses that capture the non-linear effects. By studying these response functions we can determine which frequencies of the monitored signal affect the monitor output. (Color in digital copy).}
	\label{fig:example}
\end{figure}
\begin{figure}[t]
	\centering
	\includegraphics[width=0.8\linewidth]{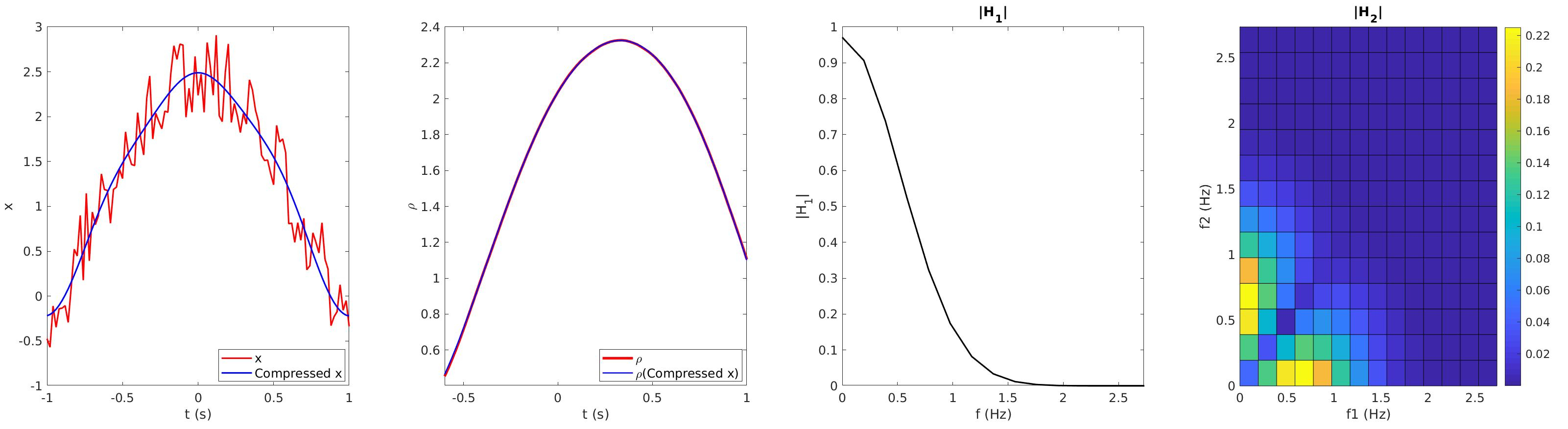}
	\caption{Monitoring-safe compression enabled by Logical Signal Processing. (Left) Signal $\sttraj$ (red) is compressed by eliminating frequencies above the formula's cut-off frequency (1.5Hz). Resulting signal is in blue. (Second panel) Robustness of original and compressed signals relative to a formula $\formula$. Despite the marked difference between the monitored signals, the robustness signals are almost identical, validating the fact that the compression was monitoring-safe. Right two panes show the order-1 and -2 frequency responses of the formula. (Color in digital copy).}
	\label{fig:comp effect once}
\end{figure}

Fig.~\ref{fig:example} shows an example of the results made possible by the methods of this paper, which we refer to as \textit{Logical Signal processing}.
The figure shows the first four Generalized Frequency Response Functions (GFRFs) of the formula $\LTLhist_{[1,1.2]}p$, read `Historically $p$'. Here, $p$ is a given atomic proposition and $\LTLhist$ is the Historically operation, the past equivalent of Always/Globally.
The way to interpret Fig.~\ref{fig:example} is, roughly, as follows: let $X$ be the Fourier transform of the input monitored signal, and $Y$ be the Fourier of the output robustness signal computed by the monitor.
Then $H_1(\omega)$ weighs the contribution of $X(\omega)$ to $Y(\omega)$, $H_2(\omega_1,\omega_2)$ weighs the contribution of the product $X(\omega_1)X(\omega_2)$, $H_3(\omega_1,\omega_2,\omega_3)$ weighs the contribution of  $X(\omega_1)X(\omega_2)X(\omega_3)$, etc.
(Product terms appear because the monitor is a non-linear operator, as will be explained later).
Using these GFRFs $H_n$, we can, for example, calculate frequencies $\omega$ s.t. $X(\omega)$ contributes very little to the output $Y(\omega)$, that is, frequencies of the signal that are irrelevant to the monitor.
This is shown in Fig.~\ref{fig:comp effect once}: using the methods of this paper, we obtained the frequency responses of formula $\LTLonce_{[0.2,0.4]}p$, read `Once $p$'.
From the responses we determined (roughly) that the $H_n$'s are negligible above $\omega=3\pi$ rad/s.
Therefore, setting $X(\omega)$ to 0 above the cut-off should change the monitoring output very little.
This analysis is confirmed by the experiment: Fig.~\ref{fig:comp effect once} shows that there's a marked difference between original and compressed signals, and a priori there is no reason to assume that their robustnesses would be similar. And yet, the calculated robustness signals are almost indistinguishable, confirming that compression was done in a monitoring-safe manner. 
Thus, a codec can suppress all frequencies above the cut-off before transmitting the signal, thus reducing the amount of transmitted bits in a monitoring-safe manner. 

In this paper, our contributions are:
\begin{itemize}
	\item a definition of atomic propositions on the basis of distributions to ensure measurements have finite bandwidth, such that the resulting logic has sound robust semantics. The new logic differs from standard Signal Temporal Logic only in the atomic propositions.
	\item a modeling of temporal logic monitors using Volterra series
	\item the first frequency analysis of temporal logic formulas and their robustness monitors. 
\end{itemize}

The proposed analysis using Volterra series can be used as a signal processing tool for frequency domain analysis of temporal logic monitors but it is not supposed to replace the monitors themselves. 
It can be used for example to design filters which respect the  monitorability requirements. 

\paragraph{Related Work.}
\label{sec:related}
There are few works dealing with the dual aspects of time and frequency in temporal logic.
In~\cite{Donze12TFL} the authors introduce a variant of Signal Temporal Logic (STL), called Time-Frequency Logic (TFL), to explicitly specify properties of the frequency spectrum of the signal - e.g. one formula might read ``$|X(\omega,t)|>5$ over the next three time units'' (where $X(\omega,t)$ is obtained by a windowed Fourier transform and $\omega$ is given).
What we do in this paper is orthogonal to~\cite{Donze12TFL}: given a standard STL formula, we want to analyze which frequencies of signal $x$ contribute to the monitoring verdict. 
We do not require the formula to tell us that explicitly, as in TFL.
In fact, our techniques are applicable to TFL itself: a TFL formula like the one above has a `hidden' frequency aspect, namely the temporal restriction ``in the next three time units''.
Our methods allow an automatic extraction of that.

The works~\cite{Alena16LTI} and~\cite{Ezio18AlgRV} provide a general algebraic framework for the semantics of temporal logic based on semi-rings, opening the way to producing new semantics automatically by concretizing the semi-ring operations.
In the special case of Metric Temporal logic, \cite{Alena16LTI} shows a formal similarity between convolution and the classical semantics of some logical operations. 
However, no frequency analysis is made (and indeed that is not the objective of~\cite{Alena16LTI}). 

In~\cite{Ezio18SCL}, the logic SCL is introduced which replaces the usual temporal operators (like Eventually) with a cross-correlation between the signal and a kernel, which resembles our use of it for atomic propositions (we preserve all temporal operators).
The objective of~\cite{Ezio18SCL} is to measure the fraction of time a property is satisfied, and no frequency analysis is made.
We also mention~\cite{FainekosP09tcs} which derives continuous-time verdicts from discrete-time reasoning.
The modified measurement in this paper should allow us to connect that work to the sampling process and therefore optimize it for monitoring (but we do not carry out that connection here).

Finally, the works~\cite{Abbas18ACCGeneralizedMTL,Abbas19GeneralClasses} consider the impact of representation basis and out-of-order reconstruction on monitoring verdicts, thus also contributing to the study of how signal processing affects runtime verification. 
The work~\cite{Abbas19GeneralClasses} implicitly replaces instantaneous measurements in the temporal operators with a scrambled, or locally orderless, measurement, but maintains instantaneous measurements in atoms. 
No frequency analysis is made.
\section{Preliminaries: Signal Temporal Logic and Volterra Series}
\label{sec:stl and volterra}
\textit{Terminology.} 
Let $\reals=(-\infty,\infty), \reals_+ = [0,\infty), \CoRe = \reals \cup \{\pm \infty\}, \Ne = \{0,1,2,\ldots\}$.
We write $\Ce$ for the set of complex numbers and $i = \sqrt{-1}$.
Given an interval $I \subset \reals$ and $t\in \reals$, $t-I \defeq \{t' \such \exists s \in I . t' = t-s\}$. 
E.g. $t-[a,b] = [t-b, t-a]$.

For a vector $x\in \Re^d$, $|x|$ is its 2-norm and if $x\in\Ce$, $|x|$ is its modulus (i.e. $|a+ib|=\sqrt{a^2+b^2}).$
For a function $f$, $\|f\|$ is its essential supremum, defined as $\|f\| \defeq \inf\{a \such |f(t)|\leq a \text{ on all sets of non-zero measure in the domain of } f\}$.

Given two sets $A$ and $B$, $A^B$ is the set of all functions from $B$ to $A$.
The space of bounded continuous functions $f:\reals^n \rightarrow \reals^m$ is denoted by $\Cc(\reals^n, \reals^m)$.
An $n$-dimensional function is one whose domain is $\reals^n$.
The Fourier transform of function $h$ will be written $\Fourier{h}$. We adopt the convention of using capitalized letters for the transform, e.g. $\Fourier{h} = H, \Fourier{g} = G$, etc.
The magnitude $|H|$ of a Fourier transform will be called the \textit{spectrum}.

Dirac's delta distribution is $\delta$. It is common to abuse notation and treat $\delta$ as a function; in that case recall that for any continuous function $f$ and $\tau \in \reals$, $\int f(t)\delta(t-\tau) dt = f(\tau)$, $\delta(0)=\infty$ and $\delta(t)=0~\forall t\neq 0$.  

In this paper, we reserve the word \textit{operator} to mean a function $\Nlo$ that maps signals to signals, e.g. $\Nlo: \Cc(\reals,\reals) \rightarrow \Cc(\reals,\reals)$. 
The composition of operators $\Nlo$ and $\Mc$ is $\Nlo \circ \Mc$.
A \textit{logical operation}, on the other hand, refers to an operation of a temporal logic, like negation, disjunction, Since and Once. 
Logical True is $\top$, False is $\bot$.

\subsection{Signal Temporal Logic (STL)}
\label{sec:prelim stl}
Signal Temporal Logic (STL)~\cite{MalerN2004STL,Donze10STLRob} is a logic that allows the succinct and unambiguous specification of a wide variety of desired system behaviors over time, such as ``The vehicle reaches its destination within 10 time units while always avoiding obstacles'' and ``While the vehicle is in Zone 1, it must obey that zone's velocity constraints''.

We use a variant of STL which uses past temporal operators instead of future ones. 
For simplicity in this paper we work with scalar-valued signals.
Formally, let $\stSet \subseteq \Re$ be the state-space and let $\TDom \subseteq \reals_+$ be an open interval (which could be all of $\reals_+$).
A \emph{signal} $\sttraj$ is a continuous bounded function.
Let $\{\mu_1,\ldots,\mu_L\}$ be a set of real-valued functions of the state: 
$\mu_k: \stSet \rightarrow \Re$.
Let $AP=\{p_1,\ldots,p_L\}$ be a set of atomic propositions. 

\begin{definition}[Past Bounded STL]
	\label{def:pastBSTL}
	The syntax of the logic \pastBSTL~is given by 
	\[\formula  \defeq \LTLtrue~|~p~|~\neg \formula ~|~ \formula_1 \lor \formula_2 ~|~ \LTLonce_I \formula ~|~ \LTLhistorically_I \formula ~|~ \formula_1 \LTLsince_I \formula_2\]
	where $p \in AP$ and $I  \subset \Re$ is a compact interval.
	The semantics are given relative to signals as follows.
	\begin{eqnarray*}
		\label{eq:boolean sat}
		(\sstraj,t) \models \top &\text{ iff }& \top
		\\
		\forall p_k \in AP, (\sstraj, t) \; \models p_k &\text{ iff }& \mu_k(\sttraj(t)) \geq 0
		\\
		(\sstraj,t) \models \neg \formula&\text{ iff }&  (\sstraj,t) \not\models \formula
		\\
		(\sstraj,t) \models  \formula_1 \lor \formula_2&\text{ iff }& (\sstraj,t) \models \formula_1 \text{ or } (\sstraj,t) \models \formula_2
		\\
		(\sstraj,t) \models \LTLonce_I \formula &\text{ iff }&  \exists t' \in t-I ~.~(\sstraj,t') \models \formula
		\\
		(\sstraj,t) \models \LTLhistorically_I \formula &\text{ iff }&  \forall t' \in t-I~(\sstraj,t') \models \formula
		\\
		(\sstraj,t) \models \formula_1 \LTLsince_I \formula_2 &\text{ iff }& \exists t' \in t-I~.~ (\sstraj,t') \models \formula_2 \text{ and } \forall t'' \in (t',t], \;\; (\sstraj,t'') \models \formula_1 
	\end{eqnarray*}
\end{definition}
It is possible to define the Once ($\LTLonce$) and Historically ($\LTLhistorically$) operations in terms of Since $\LTLsince$, but we make them base operations because we will work extensively with them.

The requirement of bounded intervals and past time is needed to enable the Volterra approximation of temporal logic operations, as will be made explicit in Section~\ref{sec:volterra apx of festl}.

\subsection{Robust Semantics}
\label{sec:prelim robustness}
The robust semantics of an \pastBSTL~formula give a quantitative measure of how well a formula is satisfied by a signal $\sttraj$. 
Usually, robustness is thought of as a functional that maps a signal $\sttraj$ and time instant $t$ to a real value $\robf(\sttraj,t)$ which is the robustness of $\sttraj$ relative to $\formula$ at $t$.
In this paper, instead, we will think of it as an \textit{operator} mapping signals $\sttraj$ to robustness signals $\robf(\sttraj)$. 
This forms the starting point of our Volterra approximation and frequency domain modeling.
\begin{definition}[Robustness\cite{FainekosP09tcs,Donze10STLRob}]
	\label{def:robustness estimate}
	Let $\formula$ be a \pastBSTL~formula.
	The \emph{robustness} $\robf$ of $\formula$ is an operator which maps signals $\sttraj: \TDom \rightarrow \stSet$ to signals $\robf(\sttraj):\reals \rightarrow \CoRe$, and is defined as follows: for any $t \in \reals$,
	\begin{eqnarray*}
		\label{eq:robustness estimate}
		\rob_{\top} (\sstraj)(t) &=& +\infty
		\\
		\rob_{p_k} (\sstraj)(t) &=& \mu_k(\sttraj(t))\,\forall p_k \in AP
		\\
		\rob_{\lnot \formula} (\sstraj)(t) &=& - \rob_{\formula} (\sstraj)(t) 
		\\
		\rob_{ \formula_1 \lor \formula_2} (\sstraj)(t) &=& \max \{\rob_{\formula_1} (\sstraj)(t) , \rob_{\formula_2} (\sstraj)(t)\} 
		\\
		\rob_{\LTLonce_I \formula}(\sstraj)(t) &=&  \max_{t' \in t-I} \robf(\sstraj)(t') 
		\\
		\rob_{\LTLhistorically_I \formula}(\sstraj)(t) &=&  \min_{t' \in t-I} \robf(\sstraj)(t') 
		\\
		\rob_{ \formula_1 \LTLsince_I \formula_2} (\sstraj)(t) &=& \max_{t' \in (t-I)} \left \{\min\{ \rob_{\formula_2} (\sstraj)(t') , \min_{t'' \in (t', t]}   \rob_{\formula_1} (\sstraj)(t'')\} \right\}  
	\end{eqnarray*}
\end{definition}

The following soundness property allows us to use robustness to monitor signals.
\begin{theorem}[Soundness~\cite{FainekosP09tcs}]
	\label{thm:robustness}
	For any signal $\sstraj$ and \pastBSTL~formula $\formula$, 
	if $\robf(\sstraj)(t) <0$ then $\sstraj$ violates $\formula$ at time $t$, and if $\robf(\sstraj)(t) > 0$ then $\sstraj$ satisfies $\formula$ at $t$. 
	Moreover, for every signal $\outtraj$ s.t. $\|\sttraj - \outtraj\|<\robf(\sstraj)(t)$, $(\outtraj,t)\models \formula$ if $(\sttraj,t)\models \formula$ and $(\outtraj,t)\not\models \formula$ if $(\sttraj,t) \not\models \formula$.
\end{theorem}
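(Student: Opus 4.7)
The theorem is the past-time analogue of the standard STL soundness result of~\cite{FainekosP09tcs}, and the natural approach is a structural induction on $\formula$ that proves both statements simultaneously. The plan is to first establish the sign-correspondence claim by induction, then strengthen the inductive hypothesis with a Lipschitz-type bound that delivers the robustness-tube claim.

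For the sign claim, the base cases follow immediately from Definitions~\ref{def:pastBSTL} and~\ref{def:robustness estimate}: $\rob_\top \equiv +\infty > 0$, and $\rob_{p_k}(\sttraj)(t) = \mu_k(\sttraj(t))$, whose sign tracks the Boolean truth of $p_k$ by definition. Negation flips both sign and verdict; disjunction uses the elementary fact that $\max\{a,b\} > 0$ iff $a > 0$ or $b > 0$. For $\LTLonce_I$ and $\LTLhistorically_I$, the induction hypothesis converts the sign of $\max_{t'\in t-I}\rob_\psi(\sttraj)(t')$ (resp.\ of the $\min$) into an existential (resp.\ universal) over $(\sttraj,t')\models\psi$ on $t-I$, matching the semantic clauses. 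The $\LTLsince_I$ case I would handle by expanding $\rob_{\formula_1\LTLsince_I\formula_2}(\sttraj)(t) > 0$ into the existence of a witness $t' \in t-I$ with $\rob_{\formula_2}(\sttraj)(t') > 0$ and $\min_{t''\in(t',t]}\rob_{\formula_1}(\sttraj)(t'') > 0$, then invoking IH on $\formula_1$ and $\formula_2$ to recover the semantic clause; the $< 0$ direction is dual.

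For the tube claim, I would strengthen the inductive invariant to the Lipschitz-type bound
\[
|\robf(\sttraj)(t) - \robf(\outtraj)(t)| \leq \|\sttraj - \outtraj\|.
\]
This immediately implies the tube property: if $\|\sttraj-\outtraj\| < \robf(\sttraj)(t)$, then $\robf(\outtraj)(t)$ has the same sign as $\robf(\sttraj)(t)$, and the first part of the theorem converts that sign into the preserved Boolean verdict. The base case requires each $\mu_k$ to be 1-Lipschitz (the standard STL setting where atoms are signed distances to $\{x : \mu_k(x)\geq 0\}$), which I would state as an explicit standing assumption. The inductive step uses the elementary contractions $|\max\{a_1,b_1\} - \max\{a_2,b_2\}| \leq \max\{|a_1-a_2|,|b_1-b_2|\}$ and its $\min$ analogue applied pointwise in $t'$; since $\|\sttraj-\outtraj\|$ is a uniform bound across all time instants, it survives the outer $\sup/\inf$ of the temporal operators unchanged.

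The main obstacle will be the $\LTLsince_I$ case, whose semantics nests an existential over $t' \in t-I$ inside a universal over $t'' \in (t',t]$ with right endpoint depending on the witness $t'$. The right move is to use the \emph{same} witness $t'$ for both $\sttraj$ and $\outtraj$ when comparing the outer maxima, so that the inner $\min_{t''\in(t',t]}$ can be bounded by IH uniformly in $t''$. A secondary subtlety is that Since uses the half-open interval $(t',t]$ while Once and Historically use the closed set $t-I$; continuity of $\robf$, inherited from continuity of $\sttraj$ and of the $\mu_k$, lets one replace $\sup/\inf$ over $(t',t]$ with $\max/\min$ over $[t',t]$ without changing values, closing the gap between the Boolean semantics and the robust computation.
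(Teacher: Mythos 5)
Your proof is correct, but it takes a genuinely different route from the one the paper leans on. The paper never proves Theorem~\ref{thm:robustness} itself---it imports it from~\cite{FainekosP09tcs}---and when it proves the analogous statement for \festl~(Theorem~\ref{thm:festl robust monitoring}) it uses the Fainekos--Pappas strategy: establish the two-sided inequality $-\dist(\sttraj,\Lang)\leq\robf(\sttraj)(t)\leq\dist(\sttraj,\SigSpace\setminus\Lang)$ for the base cases and propagate it by structural induction, so that the sign claim and the tube claim both fall out of a single sandwich. You instead prove the sign claim by direct induction and get the tube claim from a separate nonexpansiveness invariant $|\robf(\sttraj)(t)-\robf(\outtraj)(t)|\leq\|\sttraj-\outtraj\|$. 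Both are sound; yours is more elementary and makes the contraction structure of $\max/\min/\sup/\inf$ explicit, while the distance-sandwich version is phrased directly in terms of the satisfaction sets and therefore transfers verbatim to other metrics---which is exactly why the paper reuses it with the measurement metric $d$ of Eq.~\eqref{eq:distance fnt} in the \festl~case.

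Two remarks rather than gaps. First, you are right to flag the 1-Lipschitz assumption on the $\mu_k$: the tube half of the statement is simply false for arbitrary $\mu_k$ (take $\mu(x)=2x$), and the hypothesis you add is precisely the setting of~\cite{FainekosP09tcs}, where atomic robustness is a signed distance and hence 1-Lipschitz by construction; the paper's Def.~\ref{def:pastBSTL} silently elides this, and its \festl~variant recovers it automatically because $\|f\|_1\leq 1$ makes $\sttraj\mapsto\langle f(\cdot-t),\sttraj\rangle$ nonexpansive with respect to $d$. Second, a minor simplification: in the Since case the inner $\min$ over the half-open interval $(t',t]$ need not be attained, so it is cleaner to argue with $\inf$ and $\sup$ throughout (``$\inf>0$ forces every value $>0$'', ``$\inf<0$ forces some value $<0$''); this also makes the closing continuity digression unnecessary for the sign claim.
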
 


\subsection{Fourier Analysis}

We give a brief overview of the Fourier transform and LTI systems; readers familiar with this material can skip this section without loss of continuity.
Fourier analysis allows us to decompose a signal into its constituent frequencies, e.g. by decomposing it into a weighted sum of sinusoids or complex exponentials.
We can then also compute how much energy is placed in a given frequency band. 
  
The Fourier transform $X:\reals \rightarrow \cmplx$ of an input signal $\sttraj$ is defined as:
\begin{equation}
\label{eq:fourier(x(t))}
X(\omega) =  \int_{-\infty}^{\infty}x(t)e^{-i\omega t}dt 
\end{equation}
The real variable $\omega$ is the angular frequency measured in rad/sec and relates to the usual frequency $f$ in Hz as $\omega=2\pi f$.
The magnitude $|X|$ is called the amplitude spectrum of $\sttraj$; the energy in frequency band $[\omega_1, \omega_2]$ is $\int_{\omega_1}^{\omega_2} |X(\omega)|^2 d\omega$.
The Fourier transform is invertible and $\sttraj$ can be obtained using the inverse Fourier transform:
\begin{equation}
\label{eq:inversefourier(X(f))}
x(t) = \frac{1}{2\pi}\int_{-\infty}^{\infty}X(\omega)e^{i\omega t}d\omega
\end{equation}
Thus we can see that $\sttraj$ at $t$ is a weighted sum of complex exponentials, in which $e^{i\omega t}$ is weighted by $X(\omega)$.
For a quick implementation on computers, the discrete version of Fourier transform is evaluated using the Fast Fourier Transform (FFT) algorithm.

The Fourier transform is a powerful tool for studying linear time-invariant (LTI) systems. 
An LTI system is characterized by its \textit{impulse response} $h: \reals \rightarrow \reals$. For an input signal $x$, the system's output signal $y$ is given by the convolution (represented by operator $*$) of $\sttraj$ with the impulse response as follows
\begin{equation}
\label{eq:impulse_response_yt)}
y(t) = (x*h)(t)=  \int_{-\infty}^{\infty}h(\tau)x(t-\tau)d\tau
\end{equation}
The Fourier Transform reduces this convolution to simply the product of the Fourier transforms:
\begin{equation}
\label{eq:impulse_response_yw}
y(t)=(x*h)(t) \leftrightarrow Y(\omega) = X(\omega)H(\omega)
\end{equation}
 
Thus if we choose an LTI system such that $H(\omega)=0$ above some frequency $\omega_c$, we would get $y(t)$ without high frequency noise. Hence, the Fourier domain can be done for designing filters that pass or block specific frequency components of the input signal. 

But Eq.~\eqref{eq:impulse_response_yw} holds only for LTI systems, because complex exponentials are the eigenfunctions for linear, time invariant systems. Since robustness operators used in monitoring temporal logic monitors are nonlinear, they require separate treatment. 
A nonlinear extension is necessary which is provided by Volterra series.

\subsection{Volterra Series Approximation of Non-Linear Operators}
\label{sec:prelim volterra}
A \textit{finite Volterra series operator} $\Nlo$ is one of the form
\begin{equation}
\label{eq:volterra def}
(\Nlo \sttraj)(t) \defeq h_0 + \sum_{n=1}^{N} \int \ldots \int h_n(\tau_1,\ldots,\tau_n)\cdot \sttraj(t-\tau_1) \ldots \sttraj(t-\tau_n) d\tau_1 \ldots d\tau_n
\end{equation}
where $\sttraj$ is the input signal. 
A Volterra series generalizes the convolution description of linear time-invariant (LTI) systems to time-invariant (TI) but nonlinear systems.
We will drop the parentheses to simply write $\Nlo u$ for the output signal of $\Nlo$.
The $n$-dimensional functions $h_n:\reals^n \rightarrow \reals, n\geq 1$, are known as \textit{Volterra kernels}, and their Fourier transforms $H_n:\Ce^n \rightarrow \Ce$ are know as Generalized Frequency Response Functions (GFRFs):
\[H_n(\Omgn)\defeq  \int_{\Taun \in \reals^n}\exp(-i\Omgn^T\Taun)h_n(\Taun)d\Taun\]

We will use Volterra series to approximate the robustness nonlinear operator because
there exists a well-developed theory for studying their output spectra using the GFRFs.
For instance, the Fourier of the output signal $y = \Nlo \sttraj$ is $Y(\omega) = \sum_n Y_n(\omega)$ where ~\cite{Jing15VolterraBook}
\begin{equation}
\label{eq:GFRF def}
Y_n(\omega)= \frac{1}{\sqrt{n}(2\pi)^{n-1}} {\int_{\omega_1+\ldots+\omega_n=\omega}} H_n(\omega_1,\ldots,\omega_{n})X(\omega_1)\ldots X(\omega_{n})d\omega_1\ldots d\omega_{n}
\end{equation}
Eq.~\ref{eq:GFRF def} gives one way to determine which frequencies of signal $\sttraj$ affect the output robustness signal. 
If a frequency $\omega^*$ is s.t. for almost all  $\omega_1,\omega_2,\omega_3,\ldots$, all the following spectra are below some user-set threshold
\begin{equation}
\label{eq:cutoff}
H_1(\omega^*) , H_2(\omega^*,\omega_2) , H_2(\omega_1, \omega^*) , H_3(\omega^*,\omega_2,\omega_3)  , H_3(\omega_1,\omega^*,\omega_3)  , H_3(\omega_1,\omega_2,\omega^*), \text{ etc}
\end{equation}
then $X(\omega^*)$ contributes very little to the formation of the monitoring output, and can be safely discarded.

Volterra series can approximate time-invariant (TI) operators with {fading memory}.
Intuitively, an operator has fading memory if two input signals that are close in the near past, but not necessarily the distant past, yield present outputs that are close. 

\begin{definition}[Fading memory]
	\label{def:fading memory}
	Operator $\Nlo$ has \emph{fading memory} on a subset $K$ of $\Cc(\reals, \reals)$ if there is an increasing function $w:(-\infty,0] \rightarrow (0,1]$, $\lim_{t\rightarrow -\infty}w(t) = 0$ s.t. for each $u \in K$ and $\epsilon >0$ there is a constant $\delta>0$ s.t. 
	\[\forall v \in K,~\sup_{t\leq 0} |u(t)-v(t)|w(t) <\delta \rightarrow |\Nlo u(0)-\Nlo v(0)|<\epsilon \]
	Such a $w$ is called a \emph{weight function} for $\Nlo$.
\end{definition}

\begin{theorem}[Volterra approximation~\cite{Boyd85FMVolterra}]
	\label{thm:volterra apx}
	Let $K_{M_1,M_2} \defeq \{u \in \Cc(\reals, \reals) \such \|u\| \leq M_1, \|u(\cdot -\tau) - u\| \leq M_2\tau~\forall \tau \geq 0 \}$ for some constants $M_1, M_2$, and let $\epsilon>0$. 
	Let $\Rc$ be \emph{any} TI operator with fading memory on $K_{M_1,M_2}$.
	Then there is a finite Volterra series operator $\Nlo$ such that for all $u\in K_{M_1,M_2}$, $\|\Rc u - \Nlo u\| < \epsilon$
\end{theorem}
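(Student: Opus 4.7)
The plan is to follow the classical Boyd--Chua argument, which reduces the problem to an application of the Stone--Weierstrass theorem on a compact metric space of past-trajectories.

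First, I would reduce the infinite-time approximation to a pointwise one at $t=0$. Because $\Rc$ is time-invariant, $\Rc u(t) = \Rc(\sigma_t u)(0)$ where $\sigma_t u(s) \defeq u(t+s)$, and the conditions defining $K_{M_1,M_2}$ are shift-invariant, so $\sigma_t u \in K_{M_1,M_2}$ whenever $u$ is. Thus a bound of the form $|\Rc v(0) - \Nlo v(0)| < \epsilon$ uniform over $v \in K_{M_1,M_2}$ automatically gives $\|\Rc u - \Nlo u\| < \epsilon$ for all $u \in K_{M_1,M_2}$. This lets me view $u \mapsto \Rc u(0)$ as a real-valued functional on $K_{M_1,M_2}$, and by the fading memory definition this functional is continuous in the weighted metric $d_w(u,v) \defeq \sup_{t\leq 0}|u(t)-v(t)|\,w(t)$.

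Second, I would show that $K_{M_1,M_2}$, restricted to $(-\infty,0]$ and equipped with $d_w$, is a compact metric space. Uniform boundedness by $M_1$ together with the Lipschitz bound $M_2$ gives equicontinuity on every compact subinterval, so by Arzel\`a--Ascoli plus a diagonalization argument every sequence has a subsequence converging uniformly on compacts; the decay $w(t)\to 0$ as $t\to-\infty$ then upgrades this to convergence in $d_w$, since contributions from the far past are damped by $w$.

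Third, I would apply Stone--Weierstrass. Let $\mathcal{A}$ be the set of functionals on $K_{M_1,M_2}$ of Volterra polynomial form, i.e.\ finite sums $u \mapsto h_0 + \sum_n \int\!\cdots\!\int h_n(\tau_1,\ldots,\tau_n)\,u(-\tau_1)\cdots u(-\tau_n)\,d\tau_1\cdots d\tau_n$ with kernels $h_n$ drawn from a convenient class (for instance, finite linear combinations of tensor products of damped exponentials $e^{-\alpha \tau}\mathbf{1}_{\tau\geq 0}$). Each such functional is continuous in $d_w$ provided the kernels decay faster than $1/w$, and the set is closed under pointwise addition and multiplication (a product of two Volterra polynomials is again a Volterra polynomial after rearrangement), so $\mathcal{A}$ is a subalgebra containing the constants. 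Separation of points follows because linear functionals of the form $u\mapsto \int_0^\infty e^{-\alpha\tau}u(-\tau)\,d\tau$ are the Laplace transform of $u$ evaluated at $\alpha>0$, which uniquely determines a bounded continuous signal. By Stone--Weierstrass, $\mathcal{A}$ is dense in $C(K_{M_1,M_2})$, yielding a finite Volterra operator $\Nlo$ with $|\Rc u(0)-\Nlo u(0)| < \epsilon$ for every $u\in K_{M_1,M_2}$, and combining with the first step completes the proof.

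The main obstacle, I expect, is the compactness step: the time axis $(-\infty,0]$ is not compact, so Arzel\`a--Ascoli does not apply off-the-shelf, and one has to exploit the interplay between the Lipschitz-type constraint in $K_{M_1,M_2}$ and the decay of $w$ to push a compact-support approximation argument through. A secondary subtlety is verifying that the exponential-kernel subalgebra genuinely separates points under $d_w$ and that its elements are $d_w$-continuous; both are true but require matching the growth of the kernels against the chosen weight function.
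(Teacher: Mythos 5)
The paper does not actually prove this theorem: it is imported with a citation to Boyd and Chua, so there is no in-paper proof to compare against. Your sketch is essentially that original Boyd--Chua argument --- reduction via time invariance to the single functional $u \mapsto \Rc u(0)$, compactness of the past-restriction of $K_{M_1,M_2}$ under the weighted metric via Arzel\`a--Ascoli plus the decay of $w$, and Stone--Weierstrass applied to the algebra of finite Volterra polynomials --- and it correctly identifies the two genuine technical points (the non-compact time axis, and matching kernel decay against $1/w$; the latter is resolved by observing that fading memory with respect to $w$ implies fading memory with respect to any pointwise-larger admissible weight, so one may assume $1/w$ grows at most exponentially, making the damped-exponential kernels $d_w$-continuous).
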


In practice, how one obtains the Volterra approximation of a given non-linear operator depends on the operator. The probing method~\cite{Jing15VolterraBook} can be used for systems given by ODEs or auto-regressive equations.
However, it is not applicable in our case because it requires feeding complex exponentials to the operator, whereas our robustness operators can only be applied to real-valued signals.
If the operator's behavior is given by a set of input-output pairs of signals, one can first fit a function to the data, then obtain the Volterra representation of that function - see~\cite{Jing15VolterraBook,Boyd85FMVolterra}.

\subsection{Measurement Devices and the spaces $\SigSpace$ and $\FilterSp$}
\label{sec:measurements and atoms}

A measurement device is modeled in classical Physics using the theory of distributions. 
Giving even a cursory overview of this theory is beyond the scope of this paper. 
We will provide the necessary mathematical definitions and refer the reader to~\cite{FlorackIP} for a more comprehensive CS and Engineering-oriented introduction to this topic.

Let $\SigSpace$ be the space of infinitely differentiable functions with compact support in $\TDom$.
A \textit{measurement kernel} in this paper is a non-zero function $f: \reals \rightarrow \reals$ with $L_1$ norm at most 1, i.e., $\|f\|_1 \defeq \int |f(t)|dt \leq 1$.  
Let $\FilterSp$ be the space of all such functions.
Note that $f \in \FilterSp$ iff $-f \in \FilterSp$ and that the shifted kernel $f(\cdot -t)$ is in $\SigSpace$ for every $t$ whenever $f\in \SigSpace$. 
The measurement signal $y$ is then obtained by taking the following inner product:
\begin{equation}
\label{eq:y(t)}
y(t) = \langle f(\cdot -t),\sttraj \rangle \defeq \int_{-\infty}^{\infty}f(\tau-t)\sttraj(\tau)d\tau\quad \forall t
\end{equation}
One can think of the measurement device as taking an $f$-weighted average of the values of $\sttraj$ centered on $t$.
Informally, the width of $f$ dictates the resolution of the measurement: the narrower $f$, the higher the resolution.
Different measurement devices use different filters $f \in \FilterSp$.
Note that Dirac's $\delta$ is not in $\FilterSp$, but can be approximated arbitrarily well with narrow integrable functions.

\section{Bounded-Bandwidth STL}
\label{sec:bbstl}

The semantics of an atomic proposition, given in Def.~\ref{def:pastBSTL}, requires the ability to measure the instantaneous value $\sstraj(t)$.
However, it is \textit{physically impossible to do an instantaneous measurement}~\cite{FlorackIP}: in \eqref{eq:y(t)}, $y(t) = \sttraj(t)$ iff $f = \delta$, Dirac's delta. 
But $\delta$ is not realizable because it has infinite energy: $\int \delta^2(t)dt = \delta(0) = \infty$.
In this paper, we must pay closer attention to how measurements are actually made in the physical world for two reasons:

\begin{itemize}
	\item we are interested in analyzing runtime monitors when they are a part of a signal processing chain.  
	If something is not physically possible, e.g., because it requires infinite energy, it makes little sense to model how other components in the chain will process its output.	
	\item We are interested in analyzing the input-output relation of a temporal logic monitor in the frequency domain (its transfer function, as it were). Even if we kept using instantaneous measurements in the theory for convenience sake, we'd end up with the trivial result that all robustness monitors have infinite essential bandwidth~\cite{Slepian76OnBandwidth} since $\Fourier{\delta}(\omega) =  1~\forall \omega$. 
	I.e., \textit{all} frequency bands are relevant - clearly a useless analysis.
\end{itemize}

This motivates our introduction of a new logic, Bounded-Bandwidth STL (\festl, pronounced `baby Steel'),  that does away with punctual measurements while preserving the logical relations and the soundness of robust semantics.

\festl~formulas are interpreted over signals in $\SigSpace$, defined in Section~\ref{sec:measurements and atoms}. 
Let $AP$ be a set of atomic propositions s.t. there exists a bijection between $\FilterSp$ and $AP$. 
We write $p_f$ for the atom corresponding to filter $f$.
\begin{definition}[Bounded-Bandwidth STL]
	\label{def:fenstl}
The syntax of \festl~is identical to that of \pastBSTL:
\[\formula  \defeq \LTLtrue~|~p_f~|~\neg \formula ~|~ \formula_1 \land \formula_2 ~|~ \LTLonce_I \formula ~|~ \LTLhistorically_I \formula ~|~ \formula_1 \LTLsince_I \formula_2\]
where $p_f \in AP$ and $I  \subset \Re$ is a compact interval.
Its boolean semantics are identical to those of \pastBSTL~except for the atomic proposition case given here:
\[(\sttraj,t) \models p_f \text{  iff  } \langle f(\cdot -t), \sttraj\rangle \geq 0\]
Its robust semantics are identical to those of \pastBSTL~except for the base case below. 
\begin{equation*}
	\label{eq:festl robustness}
	\rob_{p_f} (\sstraj)(t) = \langle f(\cdot -t), \sttraj\rangle
\end{equation*}
\end{definition}
The robustness of any signal relative to any atomic proposition is finite: letting $S_\sttraj$ be the compact support of signal $\sttraj$, it holds that $\langle f, \sttraj\rangle \leq \int_{S_\sttraj}|f|dt \cdot \int_{S_\sttraj} |\sttraj|dt$, 
which is finite since $f$ is absolutely integrable and $\sttraj$ is continuous and therefore bounded on any compact set.
Thus $\rob_{\formula}(\sttraj) \leq \rob_{\top}(\sttraj)$ for any $\formula$, as required for an intuitive interpretation of robustness.

The following theorem establishes that \festl~can be monitored via its robust semantics.
\begin{theorem}[Soundness of robust semantics]
	\label{thm:festl robust monitoring}
	For every signal $\sstraj \in \SigSpace$ and \festl~formula $\formula$, 
	if $\robf(\sstraj)(t) <0$ then $\sstraj$ violates $\formula$ at time $t$, and if $\robf(\sstraj)(t) > 0$ then $\sstraj$ satisfies $\formula$ at $t$. Moreover, for every signal $\outtraj$ s.t. $d(\sttraj,\outtraj)<\robf(\sstraj)(t)$, $(\outtraj,t)\models \formula$ if $(\sttraj,t)\models \formula$ and $(\outtraj,t)\not\models \formula$ if $(\sttraj,t) \not\models \formula$.
\end{theorem}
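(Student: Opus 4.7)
The plan is to mirror the classical soundness proof for STL (Fainekos--Pappas), proceeding by structural induction on $\formula$, with the only truly new work happening at the atom case, since every non-atomic rule of \festl~coincides syntactically and semantically with that of \pastBSTL. Throughout, I take the metric $d$ to be the essential sup norm $\|\cdot\|$ on $\SigSpace$, which is the natural choice here because $\FilterSp$ consists of kernels with $\|f\|_1\le 1$.

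For the Boolean sign statement, I will show by induction the stronger pair of implications $\rob_\formula(\sttraj)(t) > 0 \Rightarrow (\sttraj,t)\models \formula$ and $\rob_\formula(\sttraj)(t) < 0 \Rightarrow (\sttraj,t)\not\models\formula$. The base case $\formula = \top$ is immediate. For $\formula = p_f$, both the Boolean and robust semantics are driven by the sign of $\langle f(\cdot - t), \sttraj\rangle$, so the implication is literally a tautology. The inductive cases for $\neg$, $\lor$, $\LTLonce_I$, $\LTLhistorically_I$, and $\LTLsince_I$ are verbatim those from the proof of Theorem~\ref{thm:robustness}, since the $\max/\min$ clauses on robustness mirror the $\exists/\forall$ clauses on Boolean satisfaction.

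The substantive part is the robustness tube statement. I prove by induction that for every $\formula$, every $\sttraj, \outtraj\in\SigSpace$, and every $t$,
\[
|\rob_\formula(\sttraj)(t) - \rob_\formula(\outtraj)(t)|~\leq~\|\sttraj - \outtraj\|.
\]
The tube conclusion then follows: if $\|\sttraj-\outtraj\|<\rob_\formula(\sttraj)(t)$, this inequality forces $\rob_\formula(\outtraj)(t)$ to share the sign of $\rob_\formula(\sttraj)(t)$, and the Boolean sign statement above converts this into agreement of Boolean verdicts. The base atom case is exactly where \festl~diverges from \pastBSTL, and it is dispatched by H\"older's inequality: since $\|f(\cdot - t)\|_1 = \|f\|_1 \le 1$,
\[
|\rob_{p_f}(\sttraj)(t) - \rob_{p_f}(\outtraj)(t)| = |\langle f(\cdot - t), \sttraj - \outtraj\rangle| \le \|f\|_1\,\|\sttraj - \outtraj\| \le \|\sttraj - \outtraj\|.
\]
For the inductive cases, negation is obvious; disjunction, $\LTLonce_I$, and $\LTLhistorically_I$ follow from the fact that $\max$ and $\min$ (over finite or infinite sets) are $1$-Lipschitz in the sup norm applied pointwise; $\LTLsince_I$ composes these Lipschitz properties in the obvious way. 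The inductive hypothesis then gives the claim for the full formula.

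The main obstacle is the atom case, but it is dispatched in one line precisely because the definition of $\FilterSp$ constrains measurement kernels to have $\|f\|_1 \le 1$; this normalization is exactly what makes atom robustness non-expansive and keeps the standard tube argument intact. Everything else is mechanical reuse of the past-\pastBSTL{} proof, so no further subtleties arise.
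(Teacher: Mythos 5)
Your route is genuinely different from the paper's. The paper first proves that $d$ is a metric (Lemma~\ref{lemma:distance fnt}) and then follows the Fainekos--Pappas template: it establishes the sandwich inequality $-\dist(\sttraj,\Lang)\leq \robf(\sttraj)(t)\leq \dist(\sttraj,\SigSpace\setminus\Lang)$ only for the base cases $\top$ and $p_f$, delegating the inductive propagation to the cited soundness proof. You instead prove sign-soundness and a global non-expansiveness bound $|\rob_\formula(\sttraj)(t)-\rob_\formula(\outtraj)(t)|\leq\|\sttraj-\outtraj\|$ by structural induction and combine the two. That decomposition is sound, arguably more self-contained (it never needs $d$ to be a metric, only the one-sided bound at atoms), and your H\"older step is exactly the right identification of why $\|f\|_1\leq 1$ matters.

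There is, however, one genuine gap: you prove the tube statement for the sup norm, whereas the theorem is stated for the measurement metric $d(\sttraj,\outtraj)=\sup\{\langle\sttraj-\outtraj,f\rangle \such f\in\FilterSp\}$ of Eq.~\eqref{eq:distance fnt}. By the very H\"older inequality you invoke, $d(\sttraj,\outtraj)\leq\|\sttraj-\outtraj\|$, so the hypothesis $d(\sttraj,\outtraj)<\robf(\sttraj)(t)$ is \emph{weaker} than $\|\sttraj-\outtraj\|<\robf(\sttraj)(t)$: as written, your argument covers fewer perturbations $\outtraj$ than the theorem claims. Two fixes are available. Either (i) show the two quantities coincide on $\SigSpace$ --- true, since $\sttraj-\outtraj$ is continuous with compact support, so normalized kernels concentrated near a maximizer of $|\sttraj-\outtraj|$ push $\langle\sttraj-\outtraj,f\rangle$ arbitrarily close to $\|\sttraj-\outtraj\|$ --- or, more cleanly, (ii) run your Lipschitz induction directly against $d$: the atom case becomes immediate because $f(\cdot-t)$ and $-f(\cdot-t)$ both lie in $\FilterSp$, giving $|\langle f(\cdot-t),\sttraj-\outtraj\rangle|\leq d(\sttraj,\outtraj)$ straight from the definition, and the inductive cases are unchanged since max and min are non-expansive with respect to whichever scalar bound is being propagated. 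A second, smaller wrinkle: your non-expansiveness inequality is ill-defined for $\formula=\top$ (and for any subformula whose robustness is identically $\pm\infty$), since $\rob_\top\equiv+\infty$; you should exclude these from the Lipschitz induction and observe that their Boolean verdict is constant, so the tube conclusion holds trivially. The paper's $\dist$-based formulation absorbs that case without special pleading.
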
 

Before proving the theorem, we make several remarks about the definition of \festl~and the various restrictions we placed on the signal and kernel spaces.
The measurement process $\sttraj \rightarrow \langle \sttraj, f(\cdot -t)\rangle$ can be written as a convolution $(\sttraj * f^-)(t)$, where $f^-(t) = f(-t)$. 
So $\Fourier{f^-}$ is the transfer function of the measurement process.
By selecting an appropriate set of filters, we get rid of the problem of infinite bandwidth measurements.
In particular, we make sure that $\delta$ is not in $\FilterSp$.

STL and \pastBSTL~use arbitrary functions $\mu_k$ in their atoms, which allows arbitrary processing of the signal. 
E.g. if some $\sttraj$ is 1-D, and we want to express the requirement $x^2 - e^x \geq 0 \land x\geq 1$, we can do that by using $\mu_1(x) = x^2 - e^x$ and $\mu_2(x) = x-1$.
\festl~does not have that expressiveness, but we are nevertheless able to compute arbitrary linear functionals of $\sttraj$ and compare them. 
E.g. the requirement $\langle x,f\rangle \geq 2 \langle x,g\rangle$ is captured as $\langle x, f-2g\rangle \geq 0$.
So the difference between STL and \festl, at the level of atomic propositions, is in the ability to generate auxiliary signals in a non-linear vs linear fashion.

The Volterra approximation of an operator requires the latter to be causal and have fading memory (causality is implied by the conditions of Thm.~\ref{thm:volterra apx}~\cite{Boyd85FMVolterra}).
Causality requires working with past time operations, and fading memory requires working with bounded temporal operators.
This is why we derived \festl~from \pastBSTL~rather than STL. 

To prove Thm.~\ref{thm:festl robust monitoring}, we will first need to define a distance function $d: \SigSpace \times \SigSpace \rightarrow \reals$:
\begin{equation}
\label{eq:distance fnt}
d(\sttraj, \outtraj) \defeq \sup \{\langle \sttraj -\outtraj, f \rangle  \such f\in \FilterSp \}
\end{equation}

\begin{lemma}
	\label{lemma:distance fnt}
	Function $d$ is a metric on $\SigSpace$.
\end{lemma}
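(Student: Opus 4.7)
The plan is to verify the four metric axioms: finiteness, symmetry, identity of indiscernibles, and the triangle inequality. Non-negativity will follow for free from the identity-of-indiscernibles argument combined with the fact that $\FilterSp$ is closed under negation.

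First I would check \emph{finiteness and non-negativity}. Since every $\sttraj, \outtraj \in \SigSpace$ has compact support and is continuous, $\|\sttraj - \outtraj\|_\infty$ is finite, so Hölder's inequality gives $|\langle \sttraj-\outtraj, f\rangle| \leq \|\sttraj-\outtraj\|_\infty \|f\|_1 \leq \|\sttraj-\outtraj\|_\infty$ for every $f \in \FilterSp$, and hence the supremum is finite. Non-negativity comes from the observation that the set $\{\langle \sttraj-\outtraj, f\rangle \such f \in \FilterSp\}$ is symmetric about $0$ (because $f \in \FilterSp$ iff $-f \in \FilterSp$, a fact explicitly stated in the paper), so its supremum is at least $0$. \emph{Symmetry} follows by the same involution: $d(\sttraj,\outtraj) = \sup_f \langle \sttraj-\outtraj, f\rangle = \sup_f \langle \outtraj-\sttraj, -f\rangle = \sup_g \langle \outtraj-\sttraj, g\rangle = d(\outtraj,\sttraj)$, using $g = -f$ as a change of variable over $\FilterSp$. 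The \emph{triangle inequality} is straightforward: for any $f \in \FilterSp$, linearity of the inner product gives $\langle \sttraj-\ztraj, f\rangle = \langle \sttraj-\outtraj, f\rangle + \langle \outtraj-\ztraj, f\rangle \leq d(\sttraj,\outtraj) + d(\outtraj,\ztraj)$; taking the supremum over $f$ on the left yields the desired inequality.

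The one subtle step is \emph{identity of indiscernibles}. The direction $\sttraj = \outtraj \Rightarrow d(\sttraj,\outtraj) = 0$ is immediate. For the converse, suppose $\sttraj \neq \outtraj$ in $\SigSpace$. Since both signals are continuous, there exists some $t_0 \in \TDom$ with $\sttraj(t_0) - \outtraj(t_0) \neq 0$; without loss of generality (using symmetry of $\FilterSp$) assume $\sttraj(t_0) - \outtraj(t_0) > 0$. Continuity of $\sttraj - \outtraj$ then gives a compact interval $[a,b] \subset \TDom$ on which $\sttraj - \outtraj \geq c > 0$ for some constant $c$. I would then exhibit a concrete witness kernel: set $f = \frac{1}{b-a}\, \mathbf{1}_{[a,b]}$, which is non-zero and satisfies $\|f\|_1 = 1$, hence $f \in \FilterSp$. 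For this $f$, $\langle \sttraj - \outtraj, f\rangle = \frac{1}{b-a}\int_a^b (\sttraj(\tau)-\outtraj(\tau))\,d\tau \geq c > 0$, so $d(\sttraj, \outtraj) \geq c > 0$, a contradiction.

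The main obstacle is just the identity-of-indiscernibles step, and specifically the need to produce a witness kernel inside $\FilterSp$. It is easy here because $\FilterSp$ contains non-smooth, non-compactly-supported functions (only $L^1$-boundedness and non-triviality are required), so an indicator function suffices; if one insisted on smooth bump witnesses, a mollification argument would do the job equally well. No additional machinery beyond continuity and linearity of the pairing is needed.
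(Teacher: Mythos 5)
Your proof is correct and follows essentially the same route as the paper's: non-negativity and symmetry via the closure of $\FilterSp$ under negation, the triangle inequality via linearity of the pairing, and separation of points by exhibiting a kernel supported where the continuous difference $\sttraj-\outtraj$ is strictly positive (you use a normalized indicator where the paper picks an abstract positive $g$, but the idea is identical). Your added finiteness check via H\"older is a harmless refinement the paper omits.
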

\begin{proof}
	$d$ is non-negative: indeed for all $\sttraj \in \SigSpace$ and $g \in \FilterSp$, $\sup_f \langle \sttraj, f\rangle \geq \max(\langle \sttraj, g\rangle, \langle \sttraj, -g\rangle) = |\langle \sttraj, g\rangle|$. Since $\sttraj - \outtraj \in \SigSpace$ whenever $\sttraj,\outtraj \in \SigSpace$, the conclusion follows.
	
	$d$ is symmetric: $d(\sttraj,\outtraj) = \sup_f \langle \sttraj - \outtraj, f\rangle = \sup_{f} \langle  \outtraj - \sttraj, -f\rangle =\sup_{f \in -\FilterSp} \langle  \outtraj - \sttraj, f\rangle = \sup_{f \in \FilterSp} \langle  \outtraj - \sttraj, f\rangle = d(\outtraj,\sttraj)$.
	
	$d$ satisfies the triangle inequality: for any $\sttraj, \outtraj, \ztraj \in \SigSpace$,
	\begin{equation*}
	d(\sttraj,\outtraj) = \sup\{\langle \sttraj + \ztraj-\ztraj -\outtraj , f\rangle \such f\in \FilterSp\} 
	\leq \sup_{\FilterSp}\{\langle \sttraj - \ztraj , f\rangle\} + \sup_{\FilterSp} \{\langle \ztraj -\outtraj , f\rangle \}
	= d(\sttraj,\ztraj) + d(\ztraj,\outtraj)
	\end{equation*} 
	
	$d$ separates points: that is, if $d(\sttraj,\outtraj)=0$ then $\sttraj=\outtraj$.
	We will argue by contradiction.
	Define function $\varepsilon$ by $\varepsilon(t)= \sttraj(t)-\outtraj(t)$.
	Assume $\sttraj\neq \outtraj$ so there exists a $t'\in \TDom$ s.t. $\varepsilon(t') \neq 0$ and without loss of generality we may assume $\varepsilon(t') >0$ (since $-\varepsilon \in \SigSpace$) and that $t'=0$.
	Since $\varepsilon$ is continuous, there exists a neighborhood $I$ of 0 over which $\varepsilon(t)>0$. 
	So pick $g\in \FilterSp$ s.t. $g(t)>0$ over $I$ and 0 elsewhere.
	It then holds that $\langle g,\varepsilon\rangle >0$, contradicting $d(\sttraj,\outtraj)=0$. 
	Therefore $\varepsilon=0$ and $\sttraj = \outtraj$.
	\IEEEQED
\end{proof}

Metric $d$ takes the distance between signals to be the largest measurement that can be made of their difference; this is consistent with the view that \textit{we have no access to a signal without a measurement device}. 
The only way to differentiate between signals is to measure a difference between them.
(Eq.~(2.6) in~\cite{BlanchardMathMethods} gives a more widely applicable metric but $d$ above is much more interpretable).
We can now proceed with the proof of Thm.~\ref{thm:festl robust monitoring}.
\begin{proof}
Let $\Lang$ be the set of all $\sttraj$ s.t. $(\sttraj,t)\models \formula$ and for a subset $S \subset \SigSpace$ let $\dist(\sttraj,S) = \inf_{\outtraj\in S} d(\sttraj,\outtraj)$.
By convention set $\dist(\sttraj,\emptyset) = \infty$.
Following~\cite{FainekosP09tcs}, and given that $d$ is a metric, it suffices to show that the following inequality holds for the base cases $\formula=\top$ and $\formula=p_f$:
\[-\dist(\sttraj, \Lang) \leq \robf(\sttraj)(t) \leq  \dist(\sttraj, \SigSpace \setminus \Lang)\]
The remaining cases then follow by structural induction on $\formula$.

\underline{$\formula=\top$} Then $\sttraj \in \Lang$ for any $\sttraj$ and so $\dist(\sttraj, \Lang) = 0 \leq \infty = \robf(\sttraj)(t) =\dist(\sttraj,\emptyset)= \dist(\sttraj, \SigSpace \setminus \Lang) $.

\underline{$\formula=p_f$}. Suppose $\sttraj \in \Lang$. For all $\outtraj \in \SigSpace \setminus \Lang$
\begin{eqnarray*}
d(\sttraj, \outtraj)&\geq & \langle \sttraj-\outtraj, f(\cdot -t)\rangle \text{ since } f(\cdot -t ) \in \FilterSp
\\
&=&  \langle \sttraj,f(\cdot -t)\rangle -\langle \outtraj, f(\cdot -t)\rangle
\\
&\geq&\langle \sttraj,f(\cdot -t)\rangle \text{ since } \langle \outtraj, f(\cdot -t)\rangle < 0
\\
&=& \rob_{p_f}(\sttraj)(t)
\end{eqnarray*}
Thus, $\dist(\sttraj, \Lang) = 0 \leq \robf(\sttraj)(t) \leq \dist(\sttraj, \SigSpace \setminus \Lang)$.

Now suppose $\sttraj \notin \Lang$.
As before $\inf_{\outtraj \in \Lang}d(\sttraj,\outtraj)\geq \inf_{\outtraj \in \Lang} \langle \outtraj,f(\cdot -t)\rangle -\langle \sttraj, f(\cdot -t)\rangle$ 
so $\dist(\sttraj,\Lang) \geq -\langle \sttraj, f(\cdot -t)\rangle$.
Thus, $- \dist(\sttraj,\Lang) \leq \langle \sttraj, f(\cdot -t)\rangle = \rob_{p_f}(\sttraj)(t) < 0 = \dist(\sttraj, \SigSpace \setminus \Lang)$.
 \IEEEQED
\end{proof}

\section{Volterra Approximations and Frequency Response of \festl~Formulas}
\label{sec:volterra apx of festl}
\newcommand{\Nhat}{\widehat{\Nlo}}

Having defined the logic \festl, we are now in a position to answer the question: what is the frequency content of temporal logic?
The strategy will be to show that the robustness of each logical operation ($p_f, \neg, \lor, \LTLonce_I, \LTLpastglobally_I, \LTLsince_I$) can be approximated by a Volterra series, and derive its GFRF.
Then using a composition theorem, we can derive the GFRF of entire formulas to deduce which frequencies are given significant weight by the GFRF, and which aren't.

We note at the outset that the robustness operator for $\top$, $\rob_{\top}$, maps any signal to the improper constant function $t \mapsto +\infty$. 
Because this function is not in $\Cc(\reals,\reals)$, $\rob_{\top}$ is not approximable by a finite Volterra series on the basis of Thm.~\ref{thm:volterra apx}.
This is not a serious impediment, since it is highly unlikely that an engineer would explicitly include $\top$ in a specification (e.g. $\formula = p \lor \top$), so there is no need to approximate $\rob_\top$ to begin with. 
As for formulas that accidentally turn out to be tautologies, like $p\lor \neg p$, their STL robustness is not infinite, and neither is their \festl~robustness. 

\subsection{Approximability by Volterra Series}
\label{sec:volterra ops}
We state and prove the main technical result of this paper.
\begin{theorem}
	\label{lemma:operator has volterra apx}
	For any \festl~formula $\formula$ that does not explicitly include $\top$, the robustness operator $\robf:\SigSpace \rightarrow \CoRe^\reals$ can be approximated by a finite Volterra series.
\end{theorem}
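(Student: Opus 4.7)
The plan is to proceed by structural induction on $\formula$, reducing each case to an application of Thm.~\ref{thm:volterra apx}. Accordingly, at every step I will establish two properties of $\robf$: time-invariance, and fading memory on the set $K_{M_1,M_2}$ of interest (any bounded Lipschitz family of signals in $\SigSpace$ fits into such a $K$ for suitable constants, since $\SigSpace$ signals are smooth with compact support). Rather than composing Volterra series of sub-formulas algebraically, which is cumbersome, I verify these two hypotheses directly for the composite operator at each inductive step.

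Time-invariance is the easier property: inspection of each clause of Def.~\ref{def:robustness estimate} (with the atomic case replaced as in Def.~\ref{def:fenstl}) shows that every clause is built from shift-invariant operations, so TI propagates through the induction immediately. For the base case $\formula = p_f$, $\rob_{p_f}$ is the linear convolution $\sttraj \mapsto f^- \ast \sttraj$; the estimate
\[
|\rob_{p_f}(u)(0) - \rob_{p_f}(v)(0)| \leq \sup_{t \leq 0}|u(t)-v(t)|w(t) \cdot \int_{-\infty}^0 \frac{|f^-(\tau)|}{w(\tau)}\,d\tau
\]
yields fading memory provided we can choose a weight $w:(-\infty,0]\to(0,1]$ for which the right-hand integral is finite; with the implicit causality assumption that $f$ is supported in $(-\infty,0]$, the choice $w(t)=e^{t/T}$ for sufficiently large $T$ suffices.

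For the inductive step I treat each operator in turn. Negation preserves fading memory identically, since $|-a-(-b)|=|a-b|$. For disjunction, the pointwise inequality $|\max(a,b)-\max(c,d)|\leq \max(|a-c|,|b-d|)$ transfers the $\delta$--$\epsilon$ condition from the sub-formulas to the combined operator using a common weight $w$ and $\delta=\min(\delta_1,\delta_2)$. For $\LTLonce_I\formula_1$ with $I=[a,b]\subseteq \reals_+$, I write $\rob_{\LTLonce_I\formula_1}(u)(0)=\max_{s\in[-b,-a]}\rob_{\formula_1}(u)(s)$, use time-invariance of $\rob_{\formula_1}$ to re-express each pointwise value as $\rob_{\formula_1}(u(\cdot+s))(0)$, and apply the inductive fading-memory bound to each shifted signal (which remains in $K_{M_1,M_2}$); a short computation shows that the exponential weight behaves well under the shift $s\in[-b,-a]$ (one picks up a bounded factor $e^{b/T}$), and compactness of $[-b,-a]$ together with the max-inequality above transfers fading memory to $\rob_{\LTLonce_I\formula_1}$. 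The $\LTLhistorically_I$ case is symmetric ($\max\to\min$), and $\LTLsince_I$ combines the outer max over $t'\in t-I$ (as in Once) with an inner min over $t''\in(t',t]$ (as in Historically), with a common weight function throughout.

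The main obstacle I expect is the $\LTLsince_I$ case, whose nested quantifier structure means the inner window $(t',t]$ depends on the outer variable $t'$, so the $\delta$ choices must be made uniformly across all such moving inner windows. A secondary obligation is to verify that $\robf(\sttraj) \in \Cc(\reals,\reals)$ at every step so that Thm.~\ref{thm:volterra apx} applies; this follows from the continuity of $L_1$-kernel convolutions at the base and from continuity of $\max$/$\min$ over compact moving windows at the inductive step. Finally, strict fading memory at $p_f$ may require some decay of $f$ beyond mere $L_1$ membership, so the proof may need to restrict $\FilterSp$ implicitly to kernels with suitable decay (e.g.\ compact support in $(-\infty,0]$), consistent with the causality caveat already invoked.
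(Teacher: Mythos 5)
Your core argument --- a structural induction establishing that $\robf$ is time-invariant and has fading memory, followed by an appeal to Thm.~\ref{thm:volterra apx} --- is exactly the route the paper takes (its Lemma~\ref{lemma:TI FM rob}), and your case analysis for $\neg$, $\lor$, $\LTLonce_I$, $\LTLhistorically_I$ and $\LTLsince_I$ matches the paper's in spirit, including the key point that for an increasing weight one has $w(\cdot-\tau)\leq w$ for $\tau\geq 0$, so the inductive fading-memory bound transfers to every point of the compact past window before the max or min is taken.

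There is, however, a genuine gap at the final step. Thm.~\ref{thm:volterra apx} yields a finite Volterra series approximating the operator on \emph{one fixed} set $K_{M_1,M_2}$, and $\SigSpace$ is not contained in any single such set: as a family, signals in $\SigSpace$ have unbounded sup norms and Lipschitz constants. Your remark that ``any bounded Lipschitz family of signals in $\SigSpace$ fits into such a $K$'' only shows that each signal individually lies in some $K_{M_1,M_2}$; it does not produce one approximant valid on all of $\SigSpace$, which is what the theorem asserts. The paper closes exactly this gap with two further lemmas: it exhausts $\SigSpace$ by an increasing family $K_a$, shows that $\robf$ restricted to each $K_a$ maps $K_a$ into $K_a$ (Lemma~\ref{lemma:restrict rho to Ka}), and then patches the per-$K_a$ approximants together by assigning to each signal the approximant associated with the smallest $K_a$ containing it (Lemma~\ref{lemma:restrict N to Ka}). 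Some version of this patching argument is needed to conclude. A secondary, smaller deviation: at the base case you impose decay (or causal compact support) on $f$ so that $\int |f|/w$ converges; the paper instead exploits the compact support of $u-v\in\SigSpace$ to restrict that integral to a compact set, so no hypothesis on $\FilterSp$ beyond $\|f\|_1\leq 1$ is added.
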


	Recall the set $K_{M_1,M_2}$ from Thm.~\ref{thm:volterra apx}, and recall that for a function $f$, $\|f\|$ is its essential supremum.	
	We will first show that $\robf$ is TI and has fading memory.
	However, the domain of $\robf$ is not a set of the form $K_{M_1,M_2}$ so we can't apply Thm.~\ref{thm:volterra apx} directly.
	So we show how to roughly decompose $\SigSpace$ into sets of the form $K_{M_1,M_2}$ and leverage Thm.~\ref{thm:volterra apx} to conclude.
	In all that follows, it is understood that $\formula$ does not explicitly include $\top$.
	
	\begin{lemma}
		\label{lemma:TI FM rob}
		The operator $\robf$ is TI and has fading memory.
	\end{lemma}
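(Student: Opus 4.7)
I would proceed by structural induction on $\formula$, handling time-invariance and fading memory separately.

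Time-invariance is the easy half. For the atom $p_f$, a change of variable in $\int f(\tau - t)\sttraj(\tau)\,d\tau$ shows that shifting $\sttraj$ in time produces the same shift in $\rob_{p_f}(\sttraj)$. For the inductive cases, the logical connectives act pointwise in $t$, and each temporal operator quantifies over a window $t-I$ that moves rigidly with $t$; a one-step induction closes this half.

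Fading memory requires more care. I would fix an exponentially decaying weight $w(t) = e^{\alpha t}$ on $(-\infty,0]$ for some $\alpha > 0$ and use it for all subformulas. For the base case $p_f$ I would estimate
\[
|\rob_{p_f}(u)(0) - \rob_{p_f}(v)(0)| \leq \Bigl(\sup_{s \leq 0} w(s)|u(s)-v(s)|\Bigr) \int \frac{|f(\tau)|}{w(\tau)}\,d\tau,
\]
so that, provided the trailing integral $C_f$ is finite (which holds under the natural assumption that $f$ is $L_1$-bounded and causally supported with sufficient decay), taking $\delta = \epsilon/C_f$ gives fading memory. For the inductive step, negation is immediate; disjunction follows from $|\max(a_1,b_1) - \max(a_2,b_2)| \leq \max(|a_1-a_2|,|b_1-b_2|)$, so a common weight for the two disjuncts transfers to their max. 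The interesting case is the temporal operators. For $\LTLonce_I\formula$, $\rob_{\LTLonce_I\formula}(u)(0) = \max_{t' \in -I}\rob_\formula(u)(t')$; by the Lipschitz property of $\max$ and time-invariance of $\rob_\formula$ (which lifts fading memory at $0$ to every $t' \leq 0$), the difference at $0$ is bounded by $\sup_{t'\in -I}\sup_{s \leq t'}|u(s)-v(s)|w(s - t')$. Because $w$ is exponential and $I=[a,b]$ is bounded, $w(s-t') = e^{-\alpha t'}w(s) \leq e^{\alpha b}w(s)$ for $s \leq 0$ and $t'\in -I$, so the bound collapses to $e^{\alpha b}\sup_{s\leq 0}|u(s)-v(s)|w(s)$ and the constant is absorbed into a new $\delta$. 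The $\LTLhistorically_I$ case is dual, and $\LTLsince_I$ combines both via the same max/min Lipschitz estimates.

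The main obstacle is precisely this weight-shift introduced by the temporal operators. A less carefully chosen $w$ (a polynomial decay, say) would not yield a uniform multiplicative constant over the lookback window $-I$, and the induction would break. The two ingredients that make the plan work are therefore (i) choosing an exponential weight, which turns time shifts into bounded multiplicative factors, and (ii) the \festl-specific restriction to bounded past intervals, which keeps those factors finite. Together they produce a single $w$ that witnesses fading memory for $\robf$ uniformly across formula structure.
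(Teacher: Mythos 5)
Your proposal follows the same skeleton as the paper's proof: structural induction on $\formula$, time-invariance dispatched by a change of variables, and fading memory established by lifting the inductive hypothesis from time $0$ to the whole lookback window via time-invariance, then closing with max/min comparison inequalities. The genuine difference is your commitment to an exponential weight $w(t)=e^{\alpha t}$, and this is worth dwelling on because it buys you something the paper's argument does not actually secure. The paper works with an arbitrary continuous increasing $w$ and, in its ``Fact'' for the $\LTLonce_I$ case, passes from $\sup_{t\leq 0}|u(t-\tau)-v(t-\tau)|w(t-\tau)<\delta$ to $\sup_{t\leq 0}|u(t-\tau)-v(t-\tau)|w(t)<\delta$ by invoking $w(\cdot-\tau)\leq w$; but that inequality makes the first supremum the \emph{smaller} of the two, so the implication points the wrong way as written. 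Your version repairs exactly this step: $w(s-t')=e^{-\alpha t'}w(s)\leq e^{\alpha b}w(s)$ turns the time shift into a uniform multiplicative constant over the bounded window $-I$, which you absorb into $\delta$, and your remark that a polynomially decaying weight would break the induction is the right diagnosis of why the choice matters. The one place where you are weaker than the paper is the base case: finiteness of $\int |f(\tau)|/w(\tau)\,d\tau$ is not implied by $f\in\FilterSp$, which only guarantees $\|f\|_1\leq 1$, so your ``sufficient decay'' proviso is a genuine extra hypothesis on the kernels; the paper instead restricts the integral to the compact support of $u-v$ (signals in $\SigSpace$ are compactly supported), which keeps it finite for any admissible kernel. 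You should either state your decay condition as an assumption on $\FilterSp$ or adopt the compact-support argument; with that adjustment the proof goes through.
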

	\begin{proof}
				Time invariance is immediate.
				To prove fading memory we argue by induction on the structure of $\formula$.
				
				\textbf{Base cases.} Fix an arbitrary $p_f$.
				We must exhibit a weight function s.t. for all $\varepsilon>0$ and $u,v\in \SigSpace$,
				$\sup_{t'\leq 0}|u(t)-v(t)|w(t)<\delta \implies |N_fu(0)-N_fv(0)| = |\int f(\tau)(u(\tau)-v(\tau))d\tau| <\varepsilon$.
				Fix $\varepsilon>0$, and let $w$ be a continuous increasing function from $(-\infty,0]$ to $(0,1]$.
				For every $u,v\in \SigSpace$, $g\defeq u-v$ is in $\SigSpace$; let $C$ be its compact support.
				If $\sup_{t'\leq 0}|g(t')|w(t')<\delta$ then 
				\[\left| \int f(t)g(t)dt \right| = \left|\int_C f(t)g(t)dt \right| \leq \int_{C\cap (-\infty, 0]} |f(t)| |g(t)|dt < \delta \int_{C\cap (-\infty, 0]} |f(t)|/w(t)dt\]
				The integral is finite and non-zero so choosing $\delta=\varepsilon/(\int |f(t)|/w(t)dt)$ yields the desired result.
				
				\textbf{Inductive cases.} 
				The case of $\neg \formula$ is immediate.
				
				$\bullet$ For $\formula_1 \lor \formula_2$: by the induction hypothesis there exist weight function $w_k$ for $\rob_{\formula_k}$, $k=1,2$ s.t. for all $\varepsilon>0$, $\sup_{t\leq 0}|u(t)-v(t)|w_k(t)<\delta \implies |\rob_{ \formula_k}(u)(0)- \rob_{ \formula_k}(u)(0)|<\varepsilon$. 
				Then $w= \max\{w_1,w_2\}$ is easily shown to be a weight function for $\rob_{\formula_1 \lor \formula_2}$.
				
				$\bullet$ For $\LTLonce_I \formula$: 
				By the induction hypothesis, there exists a weight function $w$ s.t. for all $\varepsilon>0$ and $u,v\in \SigSpace$ there exists $\delta>0$ s.t.
				\begin{equation}
				\label{eq:bla}
				\sup_{t\leq 0}|u(t)-v(t)|w(t)<\delta \implies |\robf(u)(0)-\robf(v)(0)|<\varepsilon/2
				\end{equation}
				\qquad \textit{Fact.} If $\sup_{t\leq 0}|u(t)-v(t)|w(t)<\delta $ then 
				$\sup_{\tau\leq 0}|\robf (u)(\tau)-\robf (v)(\tau)|\leq \varepsilon/2$.
				
				Indeed,  if $\sup_{t\leq 0}|u(t)-v(t)|w(t)<\delta $ then it holds that for all $\tau\geq 0$, $\sup_{t\leq -\tau}|u(t)-v(t)|w(t)<\delta$, which is equivalent (by a change of variables) to $\sup_{t\leq 0}|u(t-\tau )-v(t-\tau)|w(t-\tau)<\delta$.
				But $w(\cdot -\tau)\leq w$ so 
				\[\sup_{t\leq 0}|u(t-\tau )-v(t-\tau)|w(t-\tau)<\delta \implies \sup_{t\leq 0}|u(t-\tau )-v(t-\tau)|w(t)<\delta\]
				Since $u(\cdot-\tau), v(\cdot-\tau)$ are in $\SigSpace$ it follows that $|\robf (u)(-\tau)-\robf (v)(-\tau)|<\varepsilon/2$ for all $\tau\geq 0$, and therefore $\sup_{\tau\leq 0}|\robf (u)(\tau)-\robf (v)(\tau)|\leq \varepsilon/2$.
				
				Now we claim that $w$ is a weight function for $\rob_{\LTLonce_I \formula}$. 
				Indeed  $|\rob_{\LTLonce_I \formula}(u)(0) - \rob_{\LTLonce_I \formula}(v)(0)| = |\max_{t\in -I}\robf(u)(t)-\max_{t\in -I}\robf(v)(t)|$. 
				Let $t_u = \argmax_{-I}\robf(u)(t)$ and $t_v = \argmax_{-I}\robf(v)(t)$; both exist since $I$ is compact and $\robf$ is continuous.
				Assume the left-hand side of Eq.~\eqref{eq:bla} holds.
				Then we finally find the string of inequalities
				\[-\varepsilon <\varepsilon/2 \leq \robf(u)(t_v) - \robf(v)(t_v) \leq \max_{t\in -I}\robf(u)(t)-\max_{t\in -I}\robf(v)(t) \leq \robf(u)(t_u) - \robf(v)(t_u) \leq \varepsilon/2 < \varepsilon\]
				Therefore $|\robf(u)(0)-\robf(v)(0)|<\varepsilon$ as desired.
				
				$\bullet$ The case of $\LTLhist_I\formula$ is similar.
				
				$\bullet$ For $\psi = \formula_1 \LTLsince_I \formula_2$: suppose there exist weight functions $w_u$ and $w_v$ for $u$ and $v$ respectively.
				Write $\rob_k = \rob_{\formula_k}$, $k=1,2$. 
				Set $w = \max\{w_u,w_v\}$: this will be the weight function for $\rob_\psi$. 
				Given $\varepsilon>0$, there exists a $\delta>0$ s.t.
				$\sup_{t\leq 0}|u(t)-v(t)|w(t) < \delta \implies |\rob_k u(0)-\rob_k v(0)|<\varepsilon$. 
				By the above Fact, it also follows that 
				\begin{equation}
				\label{eq:robk}
				|\rob_k u(t') - \rob_k v(t')|<\varepsilon~\forall t'\leq 0, k=1,2
				\end{equation}
				We will show that $|\rob_\psi u(0)-\rob_\psi v(0)|<\varepsilon$,	
				where $\rob_\psi u(0) = \max_{t'\in -I}\{\min\{\rob_2u(t'), \min_{t''\in (t',0]}\rob_1u(t'')\}\}$. 
				Given $t'\leq 0$, define $t_u \defeq \argmin_{(t',0]} \rob_1 u(t''), t_v \defeq \argmin_{(t',0]} \rob_1 v(t'')$.
				The following inequalities are immediate:
				\[\rob_1v(t_v)-\varepsilon \leq \rob_1v (t_u)-\varepsilon < \rob_1 u(t_u) \leq \rob_1 u(t_v) < \rob_1 v(t_v)+\varepsilon\]
				Therefore 
				\begin{equation}
				\label{eq:rob1}
				|\rob_1 u(t_u) - \rob_1 v(t_v)| <\varepsilon
				\end{equation}
				From Eqs.~\eqref{eq:rob1} and~\eqref{eq:robk} it follows that 
				\[\forall t' \in -I,~|\underbrace{\min\{\rob_2u(t'), \min_{t''\in (t',0]}\rob_1u(t'')\}}_{a(t')} - \underbrace{\min\{\rob_2v(t'), \min_{t''\in (t',0]}\rob_1v(t'')\}}_{b(t')}|<\varepsilon\]
				With $t_a\defeq \argmax_{t'\in -I}a(t'), t_b\defeq \argmax_{t'\in -I}b(t')$
				\[b(t_b)-\varepsilon \leq b(t_a)-\varepsilon < a(t_a) \leq a(t_b) < b(t_b)+\varepsilon\]
				and we get the desired conclusion: $|a(t_a)-b(t_b)| = |\rob_\psi u(0) - \rob_\psi v(0)| <\varepsilon$.
				\IEEEQED
	\end{proof}
	
	We continue with the main proof.
	A signal $\sttraj$ in $\SigSpace$ is infinitely differentiable and compactly supported, so there exist $M_1$ and $M_2$ s.t. $\sttraj \in K_{M_1,M_2}$.
	Moreover for every $M_1' \geq M_1$ and $M_2' \geq M_2$, $K_{M_1,M_2} \subseteq K_{M_1',M_2'}$.
	Thus if we take any ascending sequence $(M_{1,a}, M_{2,a})_{a \in \Ne}$ with first element $(0,0)$ and which is unbounded in both dimensions, we have that $\SigSpace \subset \cup_{a\in \Ne} K_{M_{1,a}, M_{2,a}}$.
	(The lexicographic order is used: $(M_{1,a}, M_{2,a}) \leq (M_{1,a}, M_{2,a})$ iff $M_{1,a} \leq M_{1,a'}$ and $M_{2,a} \leq M_{2,a'}$).
	For conciseness write $K_{a}\defeq K_{M_{1,a}, M_{2,a}}$.
	\begin{lemma}
		\label{lemma:restrict rho to Ka}
		The restriction of $\robf$ to any $K_a$ is an operator over $K_a$, i.e. $\robf:K_a \rightarrow K_a$.
	\end{lemma}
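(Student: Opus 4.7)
I will proceed by structural induction on $\formula$, establishing for each subformula $\psi$ and each $\sttraj \in K_a \cap \SigSpace$ that $\rob_\psi(\sttraj)$ satisfies both defining conditions of $K_a$: the sup-norm bound $\|\rob_\psi(\sttraj)\| \leq M_{1,a}$ and the shift bound $\|\rob_\psi(\sttraj) - \rob_\psi(\sttraj)(\cdot-\tau)\| \leq M_{2,a}\tau$ for every $\tau\geq 0$. Continuity of the output (needed for membership in $\Cc(\reals,\reals)$) is preserved throughout: convolution with an integrable kernel in the base case, and pointwise max/min over a compact window in the inductive cases.

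For the base case $\formula = p_f$, I would write $\rob_{p_f}(\sttraj)(t) = \int f(s-t)\sttraj(s)\,ds$ and apply H\"older's inequality together with $\|f\|_1 \leq 1$ to obtain $|\rob_{p_f}(\sttraj)(t)| \leq \|\sttraj\|\cdot\|f\|_1 \leq M_{1,a}$. For the shift bound, substituting $s\mapsto s+\tau$ in the shifted integral yields
\[\rob_{p_f}(\sttraj)(t) - \rob_{p_f}(\sttraj)(t-\tau) = \int f(s-t)\bigl[\sttraj(s)-\sttraj(s-\tau)\bigr]\,ds,\]
and the same H\"older argument together with $\|\sttraj - \sttraj(\cdot-\tau)\| \leq M_{2,a}\tau$ closes the case.

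Negation is immediate, and disjunction follows from the elementary inequality $|\max\{a_1,a_2\} - \max\{b_1,b_2\}| \leq \max\{|a_1-b_1|,|a_2-b_2|\}$, which propagates both bounds pointwise. For $\LTLonce_I$ and $\LTLhist_I$, writing $g \defeq \rob_\formula(\sttraj)$ and $I=[\alpha,\beta]$, I would realign the shifted window via the substitution $s\mapsto s-\tau$ to rewrite $\rob_{\LTLonce_I\formula}(\sttraj)(t-\tau) = \max_{s\in[t-\beta,t-\alpha]} g(s-\tau)$. The shift difference then collapses to $\sup_s|g(s)-g(s-\tau)| \leq M_{2,a}\tau$ by the induction hypothesis on $g$; the Historically case is dual.

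The main obstacle, and the only technically delicate case, is Since, because its inner min ranges over $(t',t]$---a window whose endpoints depend on \emph{both} the outer variable $t$ and the inner quantified variable $t'$. I plan to apply two substitutions simultaneously---$s = t'-\tau$ on the outer max and $u = t''+\tau$ on the inner min---to realign both windows, reducing the shift difference to $\sup_{t'\in t-I}|A(t,t')-A(t-\tau,t'-\tau)|$, where $A(t,t')=\min\{\rob_{\formula_2}(\sttraj)(t'),\,\min_{t''\in(t',t]}\rob_{\formula_1}(\sttraj)(t'')\}$. After this realignment, the nested min/max dissolves via the same inequality used for $\lor$, and the two innermost comparisons $|\rob_{\formula_k}(\sttraj)(r)-\rob_{\formula_k}(\sttraj)(r-\tau)|$ are each bounded by $M_{2,a}\tau$ by the induction hypothesis on $\rob_{\formula_1}$ and $\rob_{\formula_2}$.
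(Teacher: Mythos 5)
Your proof is correct and follows essentially the same route as the paper's: H{\"o}lder's inequality with $\|f\|_1\leq 1$ handles the atomic case, and the fact that max/min over (possibly shifted) windows preserve both the sup-norm bound and the shift bound handles every other connective. The paper compresses the compositional step into one sentence and only writes out the shift bound explicitly for atoms, whereas you carry out the full structural induction---including the window-realignment substitutions for $\LTLonce_I$, $\LTLhist_I$, and $\LTLsince_I$---but the underlying argument is the same.
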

	\begin{proof}
				Take $\sttraj \in K_a$, we show that $y = \robf(\sttraj) \in K_a$.
				For any kernel $f$ and $t \in \reals$, H{\"o}lder's inequality gives $\langle f(\cdot -t ),\sttraj \rangle \leq \int |f(\tau)|d\tau \cdot \|\sttraj\| \leq M_{1,a}$, so $\|\rob_{p_f}(\sttraj) \| \leq M_{1,a}$. 
				Since the robustness of any formula other than $\top$ is obtained by taking max and min of atomic robustness values, $\| \robf(\sttraj) \| \leq M_{1,a}$.
				Moreover for all $t,s \in \TDom$
				\[|y(t)-y(s)| = \left|\int f(\tau) [\sttraj(\tau+t) - \sttraj(\tau+s)]d\tau\right| \leq \|f\|_1\cdot  \|\sttraj(\cdot +t)-\sttraj(\cdot +s)\| \leq M_{2,a} |t-s|\]
				This shows that $y \in K_a$.
				\IEEEQED
	\end{proof}
	
	\begin{lemma}
		\label{lemma:restrict N to Ka}
		Consider an operator $\Nlo : \SigSpace \rightarrow \reals^{\CoRe}$ such that its restriction $\Nlo_a$ to $K_a$ is an operator over $K_a$.
		If $\Nlo$ is TI and with fading memory, then it has a finite Volterra series approximation $\Nhat$ over $\SigSpace$.
	\end{lemma}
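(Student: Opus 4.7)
The plan is to exhibit a single finite Volterra series operator $\Nhat$ that approximates $\Nlo$ on all of $\SigSpace$, using Thm.~\ref{thm:volterra apx} layer-by-layer as the building block, but combining those layer-wise approximants into one global object. The central difficulty is that $\SigSpace$ is unbounded in both the sup-norm and the Lipschitz-constant scales, whereas each $K_{M_1,M_2}$ in Thm.~\ref{thm:volterra apx} is bounded in both, so the theorem cannot be invoked on $\SigSpace$ directly.

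First, I would use the cover $\SigSpace \subseteq \bigcup_a K_a$ (every smooth, compactly supported $\sttraj$ has finite sup-norm and finite Lipschitz constant, hence sits in some $K_a$), together with Lemma~\ref{lemma:restrict rho to Ka}, to place us inside the hypotheses of Thm.~\ref{thm:volterra apx} on each $K_a$. Fix a target precision $\varepsilon>0$. For each $a$, Thm.~\ref{thm:volterra apx} yields a finite Volterra series $\Nhat_a$ of some order $N_a$ with kernels $h_{0,a},h_{1,a},\ldots,h_{N_a,a}$ such that $\|\Nlo u - \Nhat_a u\|<\varepsilon$ for all $u\in K_a$.

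Second, I would construct $\Nhat$ from the sequence $(\Nhat_a)_a$ by an exhaustion-and-extraction argument. The nesting $K_a\subseteq K_{a+1}$ implies that, for every fixed $a$, each $\Nhat_{a'}$ with $a'\geq a$ approximates $\Nlo$ on $K_a$ to within $\varepsilon$. Passing to a subsequence $(\Nhat_{a_k})$, I would arrange that (i) the orders $N_{a_k}$ stabilize at some finite $N$, and (ii) for each $n\leq N$ the kernels $h_{n,a_k}$ converge to a limit kernel $h_n$ in $L^1_{\mathrm{loc}}(\reals^n)$. Defining $\Nhat$ as the Volterra series with kernels $h_0,\ldots,h_N$, I would then pass the limit inside the Volterra integrals (justified, for any $u\in\SigSpace$, by the compact support of $u$, which confines the integration to a fixed bounded region, plus dominated convergence) to conclude $\|\Nlo u - \Nhat u\|\leq \varepsilon$ for every $u\in\SigSpace$, since any such $u$ lies in $K_{a_k}$ for all sufficiently large $k$.

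The main obstacle is the extraction in step two: a priori there is no reason that the orders $N_a$ produced by the Boyd-Chua construction are uniformly bounded, nor that the kernels $h_{n,a}$ form a precompact family. Addressing this requires opening up the proof of Thm.~\ref{thm:volterra apx} to extract quantitative estimates on how the kernels depend on $(M_{1,a},M_{2,a})$: the kernels arise from polynomial approximations to $\Nlo$ on compact input sets, and I would argue that the fading-memory weight $w$ (which is the same across all $K_a$ since $\Nlo$ itself does not change) bounds the kernel tails uniformly in $a$, and that a Stone-Weierstrass-style truncation yields a uniform order $N$ once the target precision $\varepsilon$ is fixed. If this uniform-in-$a$ control cannot be obtained from Boyd-Chua off the shelf, the fallback is to record the slightly weaker (but sufficient for the GFRF analyses of Section~\ref{sec:volterra apx of festl}) conclusion that, for each compact subset of $\SigSpace$ on which monitoring is performed, a single finite $\Nhat$ exists.
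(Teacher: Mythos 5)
Your first step is the paper's own: time invariance and fading memory pass to each restriction $\Nlo_a$ (using Lemma~\ref{lemma:TI FM rob} and Lemma~\ref{lemma:restrict rho to Ka} to sit inside the hypotheses), so Thm.~\ref{thm:volterra apx} yields, for the fixed $\varepsilon>0$, a finite Volterra series $\widehat{\Nlo}_a$ with $\|\Nlo u-\widehat{\Nlo}_a u\|<\varepsilon$ on $K_a$. The gap is your second step, and it is not a repairable technicality: the object your extraction argument targets --- one fixed finite Volterra series that is $\varepsilon$-accurate on \emph{all} of $\SigSpace$ --- does not exist in general for the operators this lemma is applied to. Take $\Nlo=\rob_{p_f\lor\neg p_f}$, i.e. $\Nlo u(t)=|\langle f(\cdot-t),u\rangle|$; it is TI with fading memory and maps each $K_a$ into itself, so it satisfies the hypotheses. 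Pick $u_0\in\SigSpace$ and $t_0$ with $K=|\langle f(\cdot-t_0),u_0\rangle|>0$, and feed any candidate series (order $N$, kernels $h_0,\ldots,h_N$) the signals $cu_0\in\SigSpace$, $c\in\reals$. Its output at $t_0$ is $h_0+\sum_{n=1}^{N}c^n\int h_n(\tau_1,\ldots,\tau_n)\prod_{j=1}^{n}u_0(t_0-\tau_j)\,d\tau_1\ldots d\tau_n$, a polynomial in $c$ of degree at most $N$, whereas $\Nlo(cu_0)(t_0)=K|c|$; no polynomial stays within $\varepsilon$ of $K|c|$ for every $c\in\reals$ (degree $\geq 2$ grows too fast, and an affine $\alpha c+\beta$ would need $\alpha=K$ as $c\to+\infty$ but $\alpha=-K$ as $c\to-\infty$). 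Since both outputs are continuous in $t$, this pointwise discrepancy lower-bounds the sup norm. Hence no subsequence of the $\widehat{\Nlo}_a$ can converge to a globally valid limit, no matter what quantitative control is extracted from Boyd--Chua: the required orders and kernels genuinely degenerate as $a\to\infty$, which is exactly the obstacle you flagged, and it cannot be overcome.

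The paper's proof avoids this by dropping all uniformity in $a$. It shows that every $x\in\SigSpace$ lies in a unique \emph{smallest} $K_a$ (the smallest pair with $M_{1,a}\geq\|x\|$ and $M_{2,a}\geq\|x'\|$), writes $a(x)$ for its index, and defines $\widehat{\Nlo}$ piecewise by $\widehat{\Nlo}x=\widehat{\Nlo}_{a(x)}x$. Then $\|\widehat{\Nlo}x-\Nlo x\|=\|\widehat{\Nlo}_{a(x)}x-\Nlo_{a(x)}x\|<\varepsilon$ for every $x\in\SigSpace$, with no limiting argument at all. The price is that ``finite Volterra series approximation over $\SigSpace$'' must be read in this signal-indexed sense --- which series is applied depends on the amplitude/slope class of the input --- rather than in the strong, single-series sense your construction was aiming at (which, as shown above, is false). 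Your fallback conclusion (one series per bounded class of inputs) is essentially this per-$K_a$ statement and is the right thing to keep; the global extraction in your main route should be abandoned.
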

	\begin{proof}
	It is immediate that if $\Nlo$ is TI and with fading memory, then so is every $\Nlo_a$.
	Thus, fixing $\varepsilon > 0$, $\Nlo_a$ has a finite Volterra series approximation over $K_a$ by Thm.~\ref{thm:volterra apx}, call it $\Nhat_a$, so that for all $\sttraj \in K_a$, $\|\Nhat_a \sttraj - \Nlo_a \sttraj\| < \varepsilon$.	
	
	For every signal $\sttraj \in \SigSpace$, let $\sttraj'$ be its time derivative. 
	Then $\sttraj \in K_{\|\sttraj\|, \|\sttraj '\|}$, and for all $M_1' < M_1$ and $M_2' < M_2$, $\sttraj \notin K_{M_1',M_2'}$.
	(The first part of the last statement is immediate; for the second part, note first that there exists $t^*$ in the support of $\sttraj$ s.t. $M_2 = |\sttraj'(t^*)|$, so pick $b,c$ s.t. $b\leq t^*\leq c$ and $\sttraj(c) = \sttraj(b) + \sttraj'(t^*)(c-b)$, or $|\sttraj(c) - \sttraj(b)| =  |\sttraj'(t^*)|(c-b) > M_2'(c-b)$).
	So there exists a unique smallest pair $(M_{1,a},M_{2,a})$ s.t. $x\in K_a$, namely the smallest pair s.t. $M_{1,a}\geq \|\sttraj\|$ and $M_{2,a} \geq \|\sttraj'\|$.
	For a given $\sttraj$ let $a(\sttraj)$ be the index in $\Ne$ corresponding to this smallest pair.
	
	Define the operator $\Nhat:\SigSpace \rightarrow \reals^{\CoRe}$ by $\Nhat\sttraj \defeq \Nhat_{a(\sttraj)} \sttraj$.	
	Then for all $\sttraj \in \SigSpace$,  $\|\Nhat\sttraj - \Nlo \sttraj\| = \|\Nhat_{a(\sttraj)} \sttraj - \Nlo_{a(\sttraj)} \sttraj\|  < \varepsilon$, which establishes that $\Nhat$ is a finite Volterra approximation of $\Nlo$ over $\SigSpace$.	
\IEEEQED
\end{proof}

Combining the three lemmas allows us to conclude the main proof.
Even though it is only strictly correct to speak of the Volterra kernels of the Volterra series that approximates the robustness operator $\rob_{\formula}$, we will often abuse language and speak directly of the `Volterra kernels of $\formula$'.

\subsection{Calculating the Volterra Approximations and their GFRFs}
\label{sec:volterra formulas}
We seek the Volterra series that approximates $\robf$ for a given formula in the sense of Thm.~\ref{thm:volterra apx}.
Operator $\robf$ is built by composing a few basic operators.
The strategy will be to first approximate each basic operator by a Volterra series, then use a composition theorem to compose these into a Volterra series for the entire formula.
We exclude the Since operation from the remainder of this discussion because, even though its robustness is approximable by Lemma~\ref{lemma:operator has volterra apx}, we don't currently have the tools to compute that approximation. 
We expand on the technical difficulty of performing that approximation in Section~\ref{sec:apx since}.

\textbf{Basic operators.}
Fix an interval $[a,b] \subset \reals_+$, $\varepsilon >0$ and $f\in \FilterSp$, let $u,v$ denote arbitrary signals in $\SigSpace$.
The basic operators are:
\begin{multicols}{3}
	\noindent
	\begin{equation*}
		\Nlo_{f}u(t) = \langle f(\cdot -t),u \rangle
	\end{equation*}
	\begin{eqnarray*}
			\Nneg u(t) &=& -u(t)		\\
			\Nmax (v,u)(t) &=& \max\{v(t),u(t)\}	
	\end{eqnarray*}
	\begin{eqnarray}
	\label{eq:basic ops}
		\Nhist_{[a,b]} u(t) &=& \min_{t-b\leq t'\leq t-a}u(t') \nonumber \\
		\Nonce_{[a,b]} u(t) &=& \max_{t-b\leq t'\leq t-a}u(t') 
	\end{eqnarray}
\end{multicols}
The following relations hold:
\begin{multicols}{3}
	\noindent
	\begin{equation*}
	\rob_{p_f} = \Nlo_{f} 
	\end{equation*}
	\begin{eqnarray*}
	\rob_{\neg \formula}  &=& \Nneg \circ \robf \\
	\rob_{\formula_1 \land \formula_2} &=&  \Nmin(\rob_{\formula_1}, \rob_{\formula_2}) 
	\end{eqnarray*}
	\begin{eqnarray*}
		\rob_{\LTLhistorically_{[a,b]} \formula}  &=&  \Nhist_{[a,b]} \circ  \robf  \\
		\rob_{ \LTLonce_{[a,b]}\formula} &=&  \Nonce_{[a,b]} \circ  \robf  
	\end{eqnarray*}
\end{multicols}

We approximate each basic operator, on a representative set of signals, using a structure made of delays followed by a read-out polynomial; this structure can then be represented exactly with Volterra series.
It is shown in~\cite{Boyd85FMVolterra} that this structure (delays followed by polynomial) can approximate any discrete-time operator and is a special case of a structure for approximating any continuous-time operator on $\Cc(\reals, \reals)$.

There are many ways to derive Volterra approximations. 
Here we give a practical and simple way of computing such an approximation numerically.
The first two operators can be represented exactly as Volterra series.


$\bullet\underline{\Nlo_{f}u(t) = \langle f(\cdot -t),u \rangle}$.
Then $h_0=0, h_1(t) = f(-t), h_{n} \equiv 0$ when $n>1$.

$\bullet\underline{\Nneg u(t) = -u(t)}$.
Then $h_0=0, h_1(t) = -\delta(t), h_{n} \equiv 0$ when $n>1$.
Note that $\Nneg$ is never applied directly to a source signal (i.e. monitored signal $\sttraj$) but only to robustness signals.
Robustness signals are produced by previous monitors and their values are stored (perhaps symbolically) in computer memory, so it is possible to access their instantaneous values.
So this does not contradict our earlier point about the inability to instantaneously sample an analog \textit{source signal}.

$\bullet\underline{\Nonce_{[a,b]} u}$.
We approximate this operator by a polynomial $P(u(t-t_1), \ldots,u(t-t_D))$ for a given choice of polynomial degree $d$ and delays $t_j$, $a\leq t_j \leq b$ . 
$P$ is of the form $\sum_{\vec{r}} \alpha_{\vec{r}} u(t-t_1)^{r_1}\ldots u(t-t_D)^{r_D}$ where the sum is over all integer vectors $\vec{r} = (r_1,\ldots, r_D)$ s.t. $0\leq r_j\leq d, \sum_jr_j \leq d$, and the $\alpha_{\vec{r}}$'s are the unknown polynomial coefficients. 
Then given a set of $L$ signals $u_\ell$ and the corresponding output signals $\Nonce_{[a,b]}u_\ell$, and given a set $\Te$ of sampling times, we setup the linear system in the $\alpha_{\vec{r}}$'s:
\begin{equation}
\label{eq:lin sys}
\sum_{\vec{r}} \alpha_{\vec{r}} u_\ell(t-t_1)^{r_1}\ldots u_\ell(t-t_D)^{r_D} = \Nonce_{[a,b]}u_\ell(t), 1\leq \ell \leq L, t \in \Te
\end{equation}
A least-squares solution yields the $\alpha$'s.
We force $\alpha_{\vec{0}}=0$ since the $\Nonce$ operator has 0 response to 0 input.
Therefore $h_0 =0$.
Given this approximation we seek the kernels $h_n$ s.t.
\[P(u(t-t_1), \ldots,u(t-t_D)) = \sum_{\vec{r}} \alpha_{\vec{r}} u(t-t_1)^{r_1}\ldots u(t-t_D)^{r_D} = \sum_{n=1}^N \int_{\vec{\tau}\in \reals^n} h_n(\vec{\tau})\prod_{j=1}^{n}u(t-t_j)d\vec{\tau}\]

Define $\Delta_d^D(n)= \{\vec{r} = (r_1,\ldots,r_D) \in \Ne^D \such 0\leq r_j\leq d, \sum_j r_j =n \}$ and let $\Delta_d^D = \cup_{0\leq n \leq d} \Delta_d^N(n)$.
For a given $\vec{r} \in \Delta_d^D(n)$,
\begin{eqnarray*}
&&u(t-t_1)^{r_1}\ldots u(t-t_D)^{r_D} 
\\ 
&&= \int \underbrace{\delta(\tau_1-t_1)\ldots \delta(\tau_{r_1}-t_1)}_{r_1 \text{terms}}    \underbrace{\delta(\tau_{r_1+1}-t_2)\ldots \delta(\tau_{r_1+r_2}-t_2)}_{r_2 \text{terms}}    \ldots \underbrace{\delta(\tau_{n-r_D+1}-t_D)\ldots \delta(\tau_{n}-t_D)}_{r_D \text{terms}} \prod_{j=1}^nu(t-\tau_j)d\vec{\tau}
\end{eqnarray*}

Therefore define $h_n^{\vec{r}}(\tau_1,\ldots,\tau_n) \defeq \alpha_{\vec{r}} \prod_{j=1}^{r_1}\delta(\tau_j-t_1)\ldots \prod_{j=n-r_D+1}^{n}\delta(\tau_j-t_D)$.
We can now express
\begin{eqnarray*}
\label{eq:a}
P(u(t-t_1), \ldots,u(t-t_D)) &=& \sum_{\vec{r} \in \Delta_d^D} \alpha_{\vec{r}} u(t-t_1)^{r_1}\ldots u(t-t_D)^{r_D}
= \sum_{n=1}^{d} \sum_{\vec{r} \in \Delta_d^D(n)} \int_{\vec{\tau}\in \reals^n} h_n^{\vec{r}}(\vec{\tau}) \prod_{j=1}^{n}u(t-\tau_j)d\vec{\tau}
\\
&=& \sum_{n=1}^{d} \int \left [\sum_{\vec{r} \in \Delta_d^D(n)}  h_n^{\vec{r}}(\vec{\tau}) \right] \prod_{j=1}^{n}u(t-\tau_j) d\vec{\tau}
\\
&\defeq & \sum_{n=1}^{d} \int h_n(\vec{\tau})\prod_{j=1}^{n}u(t-\tau_j) d\vec{\tau}
\end{eqnarray*}

Therefore $H_0=0$ and the $n^{th}$-order GFRF is 
\begin{equation}
\label{eq:Hn once}
H_n(\Omgn) = \sum_{\vec{r} \in \Delta_d^D(n)} \Fourier{h_n^{\vec{r}}}(\Omgn) = \sum_{\vec{r}}  \alpha_{\vec{r}} \exp(-i\cdot  t_1 \sum_{j=1}^{r_1} \omega_j)\ldots \exp(-i \cdot t_D \sum_{j=n-r_D+1}^n \omega_j)
\end{equation}

The same approach is used with $\Nhist_{[a,b]}$.

$\bullet\underline{\Nmin (u,v)(t) = \min\{u(t),v(t)\}}$.
Here we must use a separable approximation of the form  $\Nmin(u,v) \approx \Uc u + \Vc v$.
This avoids product terms involving $u$ and $v$ which cause the loss of the nice GFRF representation of Volterra kernels~\cite{Billings03MISOVolterra}.
The Volterra operators $\Uc$ and $\Vc$ are obtained, again, by polynomial fitting.
Specifically, $\Uc u (t) = R(u(t))$ for a polynomial $R$ and $\Vc v (t) = Q(v(t))$ for a polynomial $Q$.
Both polynomials have a 0 constant term since zero inputs produce a zero output from $\Nmin$.
Note also that only the present value of the signal, $u(t)$, is used, since  it doesn't make sense to use past values $u(t-\tau)$ when approximating the instantaneous min operator.
The coefficients of the polynomials are obtained by least-squares as before.
Once the coefficients of $R$ and $Q$ are calculated, the following easily established proposition gives the kernels of the equivalent Volterra series.
\begin{proposition}
	\label{prop:poly volterra}
	The polynomial operator defined by $Nu(t) = \sum_{0\leq k\leq d} \alpha_{k} u(t)^k$ has an exact Volterra representation given by $h_0 = \alpha_{0}$, $h_n(\tau_1,\ldots,\tau_n) = \alpha_n \delta(\tau_1)\ldots \delta(\tau_n),~n\geq 1$.
	The corresponding GFRFs are $H_0=2\pi \alpha_0\delta(\omega)$, $H_n(\Omgn) = \alpha_n~\forall \Omgn$.
\end{proposition}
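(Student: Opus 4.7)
The proposition has two parts: (i) that the stated kernels $h_0=\alpha_0$ and $h_n(\tau_1,\ldots,\tau_n)=\alpha_n\delta(\tau_1)\cdots\delta(\tau_n)$ give an exact Volterra representation of $Nu(t)=\sum_{0\le k\le d}\alpha_k u(t)^k$, and (ii) that the associated GFRFs are the claimed constants (with $H_0$ being a spike at $\omega=0$). Both parts reduce to mechanical applications of the sifting property of $\delta$, so the plan is a direct calculation rather than an inductive or structural argument.

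First I would verify (i) by substituting the proposed $h_n$'s into the finite Volterra series~\eqref{eq:volterra def} and collecting terms. For each $n\ge 1$, the $n$-fold integral factors as
\[\int\!\!\cdots\!\!\int \alpha_n\,\delta(\tau_1)\cdots\delta(\tau_n)\,u(t-\tau_1)\cdots u(t-\tau_n)\,d\tau_1\cdots d\tau_n = \alpha_n\prod_{j=1}^n\Bigl(\int \delta(\tau_j)u(t-\tau_j)d\tau_j\Bigr) = \alpha_n\, u(t)^n,\]
where the sifting identity $\int\delta(\tau)u(t-\tau)d\tau=u(t)$ is applicable because the signals in $\SigSpace$ are continuous. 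Adding the constant $h_0=\alpha_0$ and summing $n=1,\ldots,d$ recovers $Nu(t)=\sum_{k=0}^d\alpha_k u(t)^k$ exactly. This settles the representation claim without any approximation.

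Next I would compute the GFRFs directly from their definition $H_n(\Omgn)=\int_{\reals^n}\exp(-i\Omgn^{\!\top}\Taun)\,h_n(\Taun)\,d\Taun$. For $n\ge 1$, sifting at $\Taun=0$ gives
\[H_n(\Omgn)=\alpha_n\int_{\reals^n}\exp(-i\Omgn^{\!\top}\Taun)\,\delta(\tau_1)\cdots\delta(\tau_n)\,d\Taun =\alpha_n\,e^{0}=\alpha_n,\]
independent of $\Omgn$, as claimed. For $n=0$, the $h_0$ term contributes a constant-in-$t$ signal $t\mapsto\alpha_0$ to the output $Nu$, and with the paper's Fourier convention in~\eqref{eq:fourier(x(t))} the transform of the constant signal $1$ is $2\pi\,\delta(\omega)$, hence $H_0(\omega)=2\pi\alpha_0\,\delta(\omega)$.

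There is no real obstacle here; the only subtlety is the standard one flagged in the paper's Preliminaries, namely that one is treating $\delta$ as a function under the sifting convention. This is legitimate because the integrand $u(t-\cdot)$ is continuous for signals in $\SigSpace$, so the sifting identities used in both steps are well-defined. The same remark legitimises interpreting the Fourier transform of the constant $\alpha_0$ as the distribution $2\pi\alpha_0\delta$.
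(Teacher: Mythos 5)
Your proof is correct and is exactly the ``easily established'' direct calculation the paper has in mind (the paper in fact omits the proof entirely): substituting the delta-product kernels into~\eqref{eq:volterra def} and into the GFRF definition and applying the sifting property, with the $H_0=2\pi\alpha_0\delta(\omega)$ case handled by the standard Fourier pair for a constant under the convention of~\eqref{eq:fourier(x(t))}. Nothing is missing.
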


This concludes the derivation of Volterra series for the basic operators.
The following theorem allows us to calculate the GFRFs of entire formulas.
Given $\Omgn \in \reals^n$ and $\vec{m} \in \Delta_n^k(n)$, we can divide $\Omgn$ into $k$ sub-vectors, $\Omgn = (\Theta_1,\Theta_2,\ldots ,\Theta_k)$, s.t. sub-vector $\Theta_j$ has length $m_j$.
Define the mixing matrix $S^{(k,n)}$ of dimensions $k$-by-$n$ whose $j^{th}$ row is $(\vec{0}_{1\times (m_1+\ldots+m_{j-1})}, \vec{1}_{1\times m_j}, \vec{0}_{1\times (m_{j+1}+\ldots+m_k)})$, 
so $S^{(k,n)}\Omgn = (\sum_{j=1}^{m_1} \omega_j,\sum_{j=m_1+1}^{m_1+m_2} \omega_j,\ldots, \sum_{j=n-m_k+1}^n \omega_j)^T$.

\begin{theorem}[\cite{Carassale10Volterra}]
	\label{thm:carassale}
	Let $\Ac,\Bc$ be Volterra operators with GFRFs $\{A_n\}_{n=1}^{n_\Ac}$ and $\{B_n\}_{n=1}^{n_\Bc}$ respectively. Then the operator $\Hc \defeq \Bc \circ \Ac$ has Volterra GFRFs given by 
	\begin{eqnarray*}
	H_0 &=& \sum_{k=0}^{n_\Bc} B_k(0)A_0^k \\
	H_n(\Omgn) &=&\sum_{k=1}^{n_\Bc} \sum_{\vec{m}\in \Delta_n^k(n)}B_k(S^{(k,n)}\Omgn) \prod_{j=1}^k A_{m_j}(\Theta_j),\quad n\geq 1
	\end{eqnarray*}
\end{theorem}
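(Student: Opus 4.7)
The plan is to proceed by direct computation: expand $\Hc u = \Bc(\Ac u)$ term by term, group summands by their total order in $u$, and then identify the resulting time-domain kernel's multidimensional Fourier transform with the claimed formula for $H_n$. Concretely, I would write $v = \Ac u = \sum_{m=0}^{n_\Ac} v_m$ where $v_m(t) = \int a_m(\tau_1,\ldots,\tau_m) u(t-\tau_1)\cdots u(t-\tau_m)\, d\tau_1\cdots d\tau_m$, and substitute into $\Bc v(t) = \sum_{k=0}^{n_\Bc} \int b_k(\sigma_1,\ldots,\sigma_k) v(t-\sigma_1)\cdots v(t-\sigma_k)\, d\sigma$. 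Expanding the product of sums gives $v(t-\sigma_1)\cdots v(t-\sigma_k) = \sum_{\vec{m}} \prod_{j=1}^k v_{m_j}(t-\sigma_j)$, where $\vec{m} = (m_1,\ldots,m_k)$ ranges over $\{0,\ldots,n_\Ac\}^k$. Fixing a target order $n \geq 1$, the tuples with $\sum_j m_j = n$, i.e.\ $\vec{m} \in \Delta_n^k(n)$, identify precisely the $n$-th order contribution to $\Hc u$; the degenerate case $n=0$ reduces to substituting the constants $v_0 = A_0$ and yields $H_0 = \sum_k B_k(0)\, A_0^k$ directly.

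The next step is to compute the multidimensional Fourier transform of this $n$-th order contribution. After substituting each $v_{m_j}(t-\sigma_j) = \int a_{m_j}(\vec{\tau}_j)\, u(t-\sigma_j-\tau_{j,1})\cdots u(t-\sigma_j-\tau_{j,m_j})\, d\vec{\tau}_j$, the integrand becomes $b_k(\sigma)$ times a product of shifted lower-order kernels, with the monitored input $u$ appearing in exactly $n$ copies. Passing to the multilinear Fourier convention in which the order-$n$ GFRF satisfies $Y_n(\Omgn) = H_n(\Omgn)\, U(\omega_1)\cdots U(\omega_n)$, a time shift of $\sigma_j$ applied to a block of $m_j$ arguments contributes a phase $\exp(-i\sigma_j(\omega_{j,1}+\cdots+\omega_{j,m_j}))$. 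Concatenating these blocks along the rows of the mixing matrix $S^{(k,n)}$, the $\sigma_j$ integrals collapse precisely into $B_k(S^{(k,n)}\Omgn)$, while the $\vec{\tau}_j$ integrals collapse into $A_{m_j}(\Theta_j)$. Summing over $k=1,\ldots,n_\Bc$ and over $\vec{m}\in \Delta_n^k(n)$ then recovers the stated formula.

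The key auxiliary identity to establish cleanly is a shift-and-tensor property: an expression of the form $\int b_k(\sigma) \prod_j a_{m_j}(\vec{\tau}_j - \sigma_j \vec{1})\, d\sigma$ has its GFRF equal to $B_k(S^{(k,n)}\Omgn) \prod_j A_{m_j}(\Theta_j)$, where the block of frequencies $\Theta_j$ is summed by the $j$-th row of $S^{(k,n)}$. This follows from the standard Fourier shift rule applied to each group of arguments and Fubini's theorem to interchange the orders of integration; finiteness of $n_\Ac$ and $n_\Bc$, together with the integrability properties implicit in any finite Volterra representation, ensures absolute convergence throughout.

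The main obstacle, in my experience writing these arguments, is exactly the combinatorial bookkeeping of frequency indices: keeping track of which $\omega_i$ is associated with which $a_{m_j}$ kernel (and hence which sub-vector $\Theta_j$), and consistently ordering them so that the block-sum matrix $S^{(k,n)}$ applies. In a symmetrized Volterra formulation every such assignment becomes equivalent after symmetrization of $H_n$ and the combinatorial multiplicities are absorbed by permutations; if one insists on unsymmetrized kernels one must separately verify that the count matches the $\Delta_n^k(n)$ indexing and the row structure of $S^{(k,n)}$.
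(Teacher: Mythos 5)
Your proposal is correct, but note that the paper does not actually prove this statement: it is imported verbatim by citation from the reference on Volterra composition, so there is no in-paper proof to compare against. Your direct-expansion argument --- substituting $v=\Ac u$ into $\Bc$, grouping the cross terms by total homogeneity order $n=\sum_j m_j$ over $\vec{m}\in\Delta_n^k(n)$, and reducing everything to the shift-and-tensor identity whose Fourier transform is $B_k(S^{(k,n)}\Omgn)\prod_j A_{m_j}(\Theta_j)$ --- is precisely the standard derivation found in the cited source, and your closing caveat about unsymmetrized kernels versus symmetrization over permutations of $\Omgn$ is the one genuine subtlety worth flagging; the formula as stated yields an asymmetric kernel for each fixed block assignment, with symmetry recovered only after averaging over permutations.
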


Thus for instance, to get the GFRF of $\formula = \LTLonce_{[0,0.5]}g$ for some atom $g$, we derive the GFRF $\{B_k\}$ of $\Nonce_{[0,0.5]}$ and $\{A_k\}$ of $g$, then compute the GFRF of $\Nlo_{\formula}$ using~Thm.~\ref{thm:carassale}. 

\subsection{Why is Approximating $\formula_1 \LTLsince_I \formula_2$ Different?}
\label{sec:apx since}
For convenience, we write $\psi = \formula_1 \LTLsince_I \formula_2$.
The robustness $\rob_\psi$ is an operator on $\Cc(\reals,\reals)$, and we have shown that it is approximable by a Volterra series.
However it is constructed out of operators that change the dimensions of the signals, which adds difficulties to the actual computation of the approximation.

Specifically: fix an interval $[a,b] \subset \reals_+$, $\varepsilon >0$ and $f\in \FilterSp$; 
let $u$ denote an arbitrary signal in $\SigSpace$ and let $y \in \Cc(\reals^2,\reals)$, i.e. a continuous bounded function from $\reals^2$ to $\reals$.
We define three operators: $\Nmax_{2\rightarrow 1}: \Cc(\reals^2,\reals) \rightarrow \Cc(\reals,\reals)$,  
$\Nmin_{1\rightarrow 2}: \Cc(\reals,\reals) \rightarrow \Cc(\reals,\reals^2)$,
and $\Nmin_{2\rightarrow 2}: \Cc(\reals^2,\reals^2) \rightarrow \Cc(\reals^2,\reals)$.
They are:

\begin{multicols}{3}
	\noindent
	\begin{eqnarray*}
		\Nmax_{2\rightarrow 1}y(t) &=& \max_{t-b\leq t'\leq t-a} y(t',t)
	\end{eqnarray*}
	\begin{eqnarray*}
		\Nmin_{1\rightarrow 2} u(t',t) &=& \min_{t'<t''\leq t}u(t'')
	\end{eqnarray*}
	\begin{eqnarray*}		
		\Nmin_{2\rightarrow 2}(u,y)(t',t) &=& \min\{u(t'), y(t',t)\}
	\end{eqnarray*}
\end{multicols}
The following relation holds:
\begin{equation*}
\rob_{ \formula_1 \LTLsince_{[a,b]}\formula_2} =  \Nmax_{2\rightarrow 1} \circ \Nmin_{2\rightarrow 2}(\rob_{\formula_2}, \Nmin_{1\rightarrow 2} \rob_{\formula_1})
\end{equation*}

The approximation of $\rob_\psi$ by Volterra series therefore requires the approximation of the above basic operators, then composing them. 
Multi-dimensional Volterra series exist (i.e., Volterra operators over $\Cc(\Re^n,\reals)$), e.g., see~\cite{Mitra96QuadraticVolterra}.
However what we have above are operators that change the dimensions of the signals.
Sandberg~\cite{Sanberg97Myopic} provides a generalization of ~\cite{Boyd85FMVolterra} which allows the approximation of certain operators that map $\Cc(\reals^n,\reals^m)$ to $\Cc(\reals^m,\reals)$.
However this still falls short of our needs because of the presence of $\Nmin_{1\rightarrow 2}$.

A `quick-and-dirty' way to produce a Volterra series representation of a \textit{given} formula $\psi$ with Since - that is, with given atoms and structure - is to approximate its input-output relation on a representative set of signals by fitting Volterra kernels.
However this requires a new fit every time we change atoms or formula structure. 
It does not provide a generic approximation that can be composed with others, as we did in Section~\ref{sec:volterra apx of festl}.



\begin{figure*}[t!]
	\centering
	\begin{subfigure}[t]{\textwidth}
		\centering
		\includegraphics[scale=0.07]{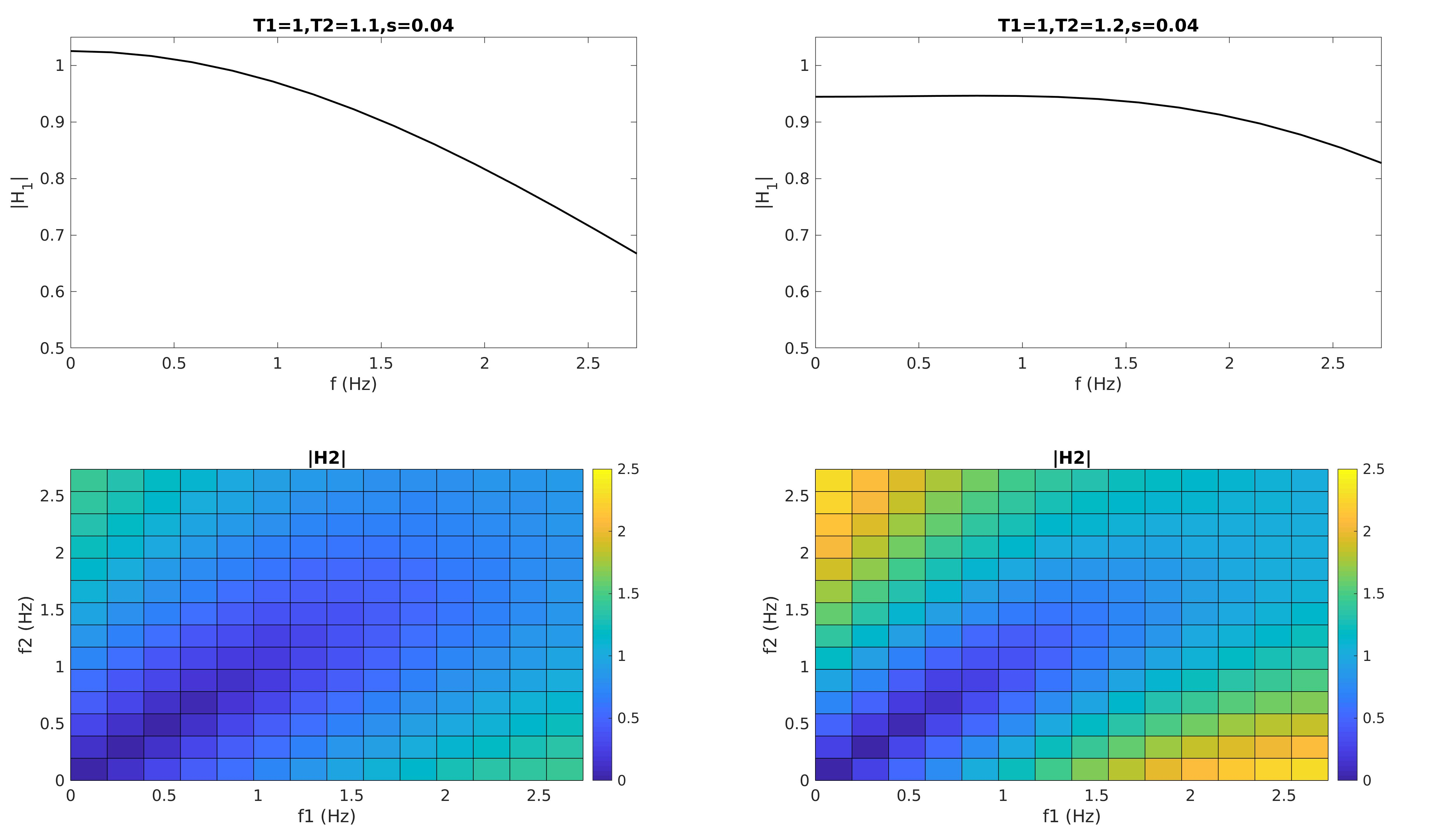}
		\caption{$|H_1|$ and $|H_2|$ for Historically$_{[1, T_2]}p_g$ with $T_2=1.1$ (left) and $T_2=1.2$ (right). The atom $g = G(0,0.04)$.}
		\label{fig:historically}
	\end{subfigure}%
	\\ \vspace{0.5cm}
	\begin{subfigure}[t]{\textwidth}
		\centering
		\includegraphics[width=0.8\linewidth]{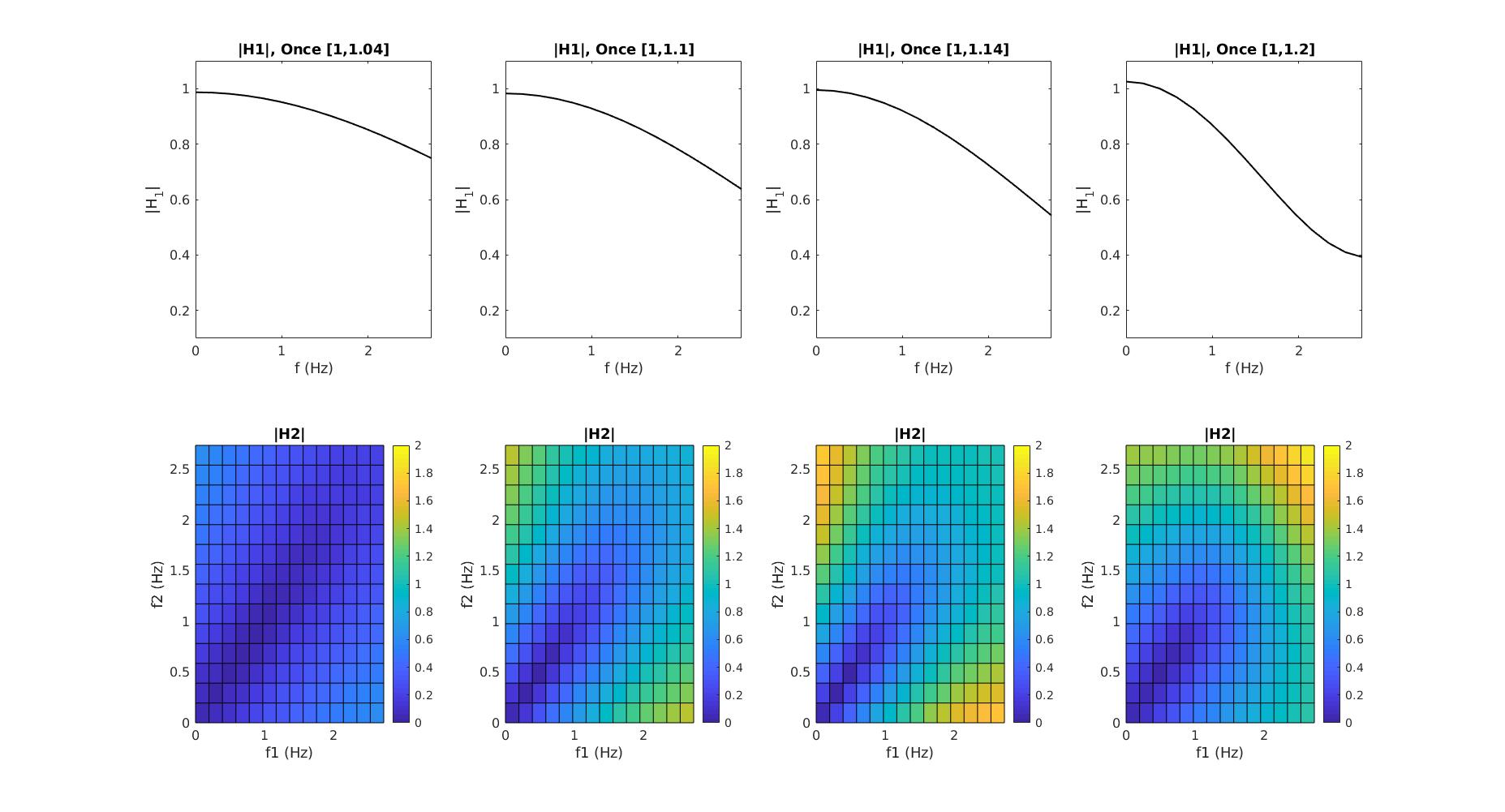}
		\caption{$|H_1|$ and $|H_2|$ for Once$_{[1, T]}p_g$ for four values of $T\in \{1.04,1.1,1.14,1.2\}$. The atom $g = G(0,0.04)$.}
		\label{fig:once T}
	\end{subfigure}
	\caption{GFRFs with varying temporal intervals. Color in digital copy.}
	\label{fig:varying T}
\end{figure*}

\begin{figure*}[t!]
	\centering
	\begin{subfigure}[t]{\textwidth}
		\centering
		\includegraphics[scale=0.07]{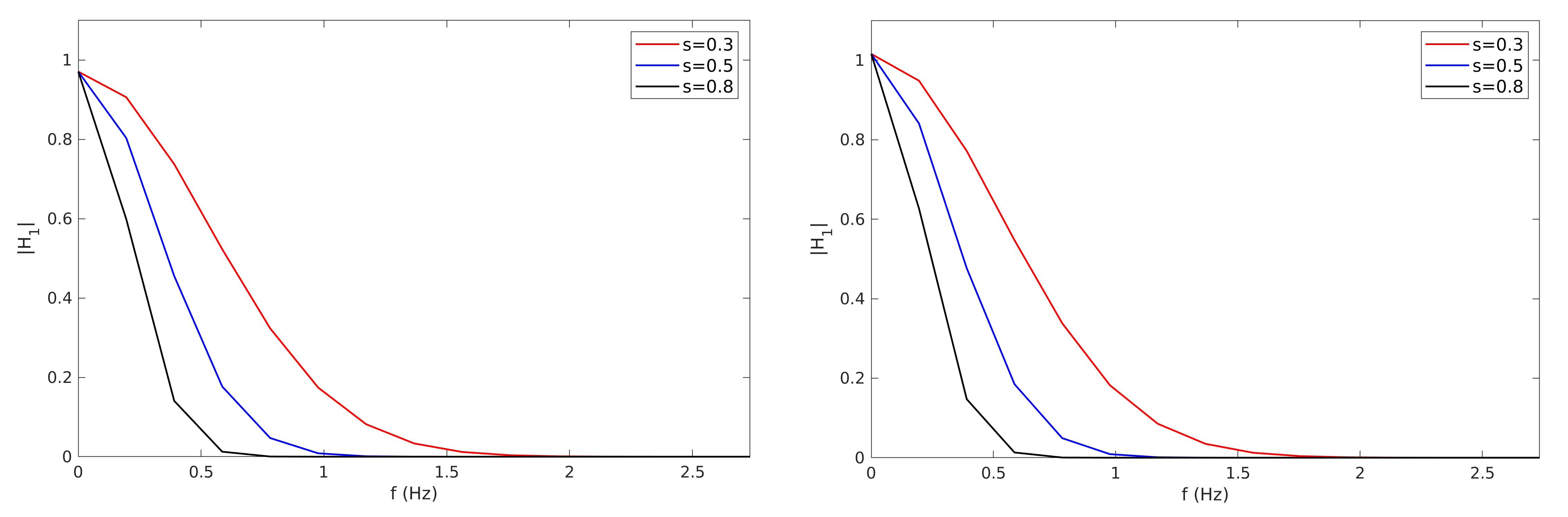}
		\caption{$|H_1|$ of Once$_{[1,1.1]} p_g$ (left) and Historically$_{[0,0.5]} p_g$ with $g=G(0,s), s=0.3,0.5,0.8$.}
		\label{fig:once hist H1 varying s}
	\end{subfigure}%
	\\ \vspace{0.5cm}
	\begin{subfigure}[t]{\textwidth}
		\centering
		\includegraphics[scale=0.065]{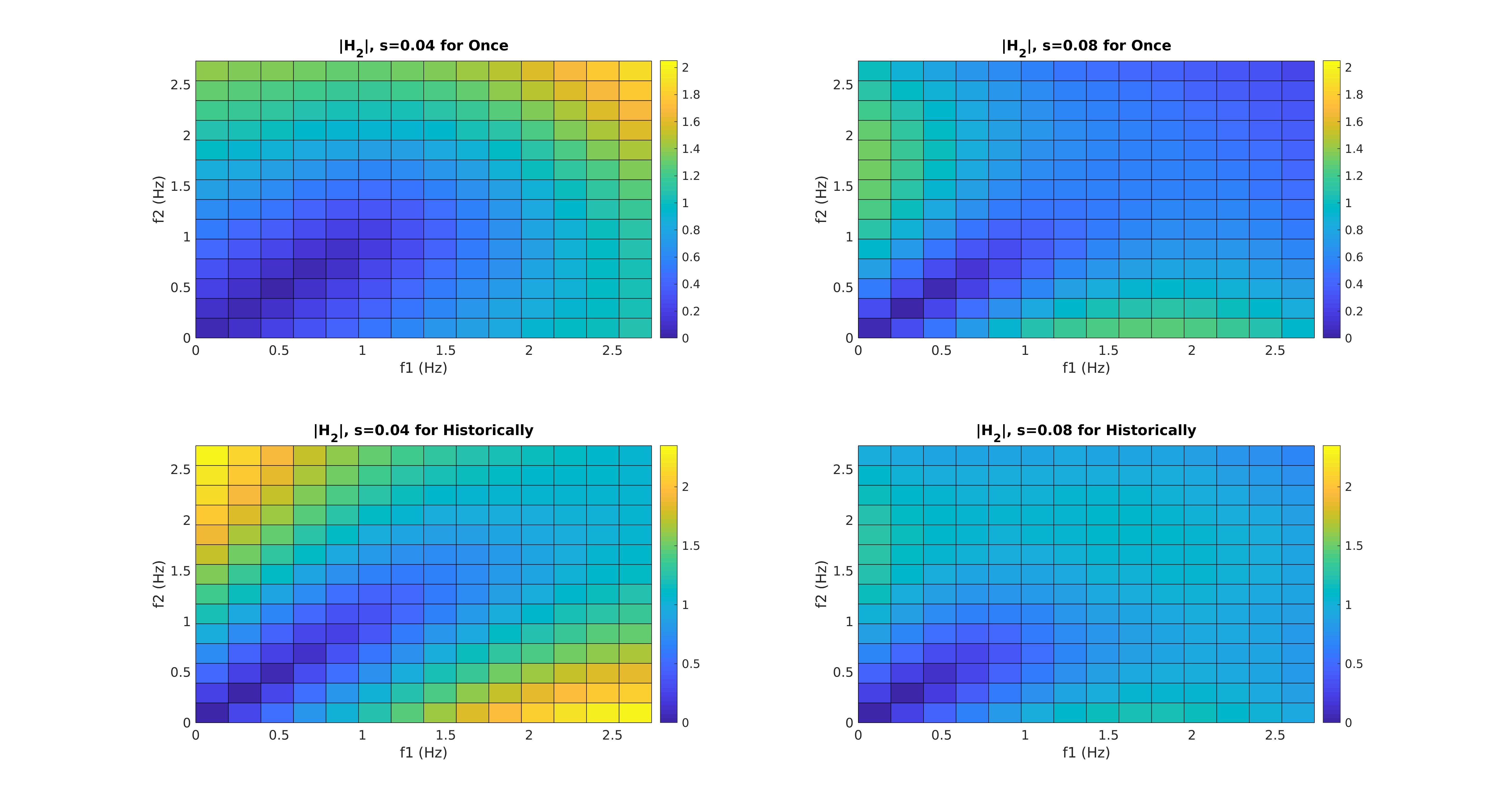}
		\caption{$|H_2|$ of Once$_{[1,1.1]} p_g$ (left) and Historically$_{[1,1.1]} p_g$ (right) with $g=G(0,s), s=0.04,0.08$.}
		\label{fig:once hist H2 varying s}
	\end{subfigure}
	\caption{Effect of support size $s$ for the atomic proposition filters. In (a), $s$ is larger than temporal interval width, which is 0.1. In (b) $s$ is much smaller. (Color in digital copy)}
	\label{fig:combined once hist varying s}
\end{figure*}
\section{Experiments: Fourier Analysis of Temporal Logic}
\label{sec:experiments}

\begin{figure*}[t!]
	\centering
		\includegraphics[scale=0.40]{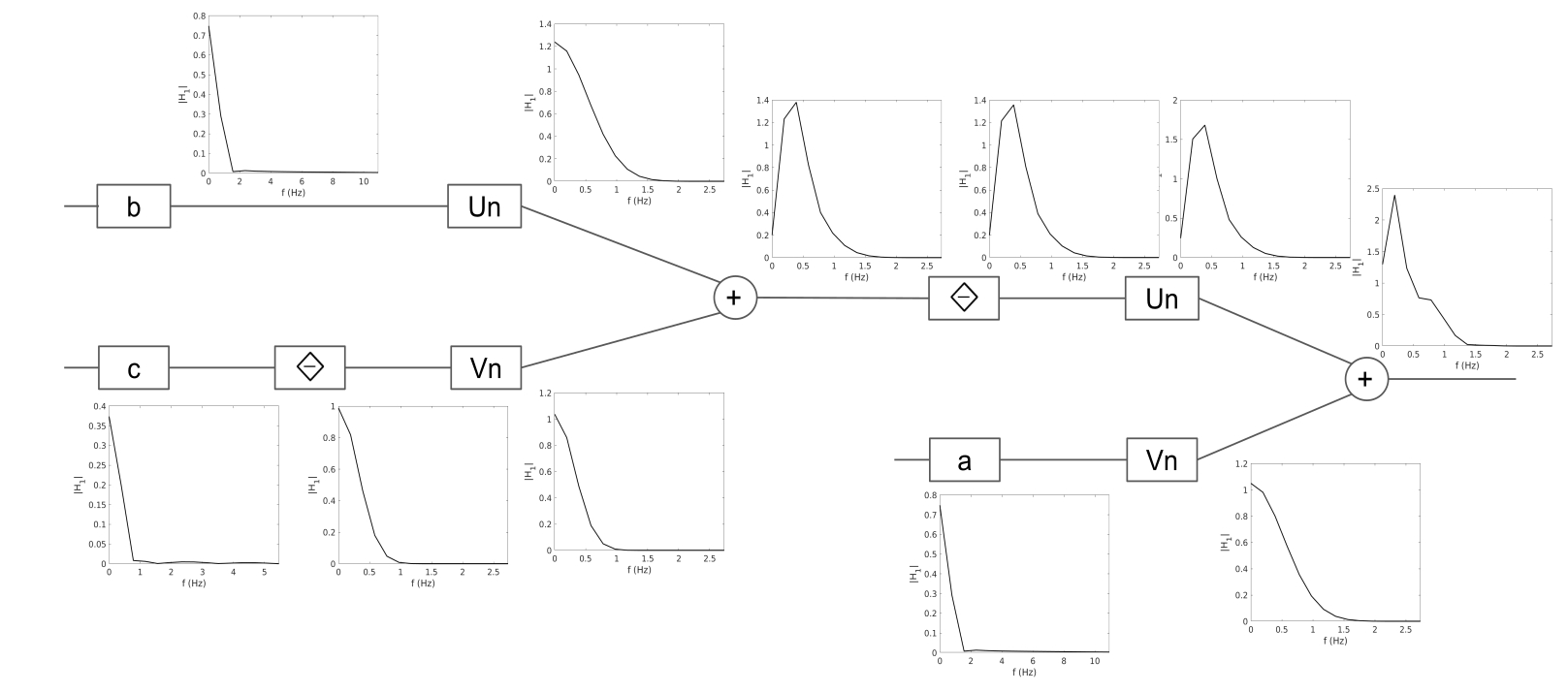}
		\caption{Block diagram of the Volterra representation of $\varphi_{T}$ given in~\ref{eq:phie}. Every displayed $|H_1|$ is the first-order spectrum of the entire composite formula up to that point. $Un$ and $Vn$ are the GFRFs of the separable Volterra operators $\Uc,\Vc$ that approximate $\Nmax$ (Section~\ref{sec:volterra formulas}.)}
		\label{fig:phie diagram}
		\vspace{0.5cm}
		\includegraphics[width=1.0\linewidth]{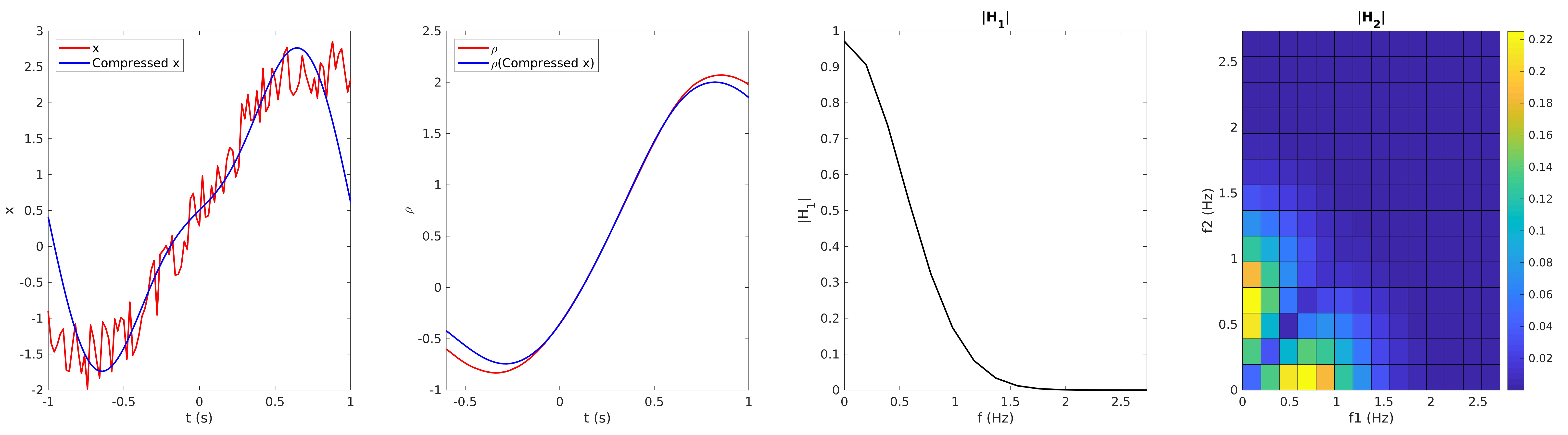}
		\caption{Filtering signals without accounting for downstream logic monitors leads to incorrect monitoring results. The frequency responses (right two panels) indicate a safe cut-off frequency around 1.5Hz. If an upstream low-pass filter applies a cut-off of 0.5Hz (left panel), the robustness signal is significantly changed (second panel). In particular, the truth values differ between red (original) and blue (post-filtering). (Colors in digital copy).}
		\label{fig:comp effect once bad}
\end{figure*}

We implemented the above calculations in a toolbox which we'll make available with the paper.
In this section we demonstrate the derivation of Generalized Frequency Response Functions for temporal logic robustness operators. 
In all experiments, the GFRFs were generated by solving appropriate versions of~\eqref{eq:lin sys} with degree-4 polynomials and test signals generated as random combinations of sinusoids.
Sinusoids are dense in $\Cc(\reals,\reals)$ so approximating the operators on sinusoids is a sensible thing to do.
The approximation error in all cases was in the order of $10^{-12}$.
That said, our objective here is not to provide the most efficient or the most general approximation scheme - that is for future work.

We reiterate that the Volterra approximations are \textit{not} meant to replace the monitoring algorithms that exist.
They are used as \textit{analysis tools} that provide a rigorous quantitative Fourier analysis of temporal logic: one that does not depend on intuition, is automatic, and such that once the GFRFs of a formula are obtained, the formula (and its monitor) are treated as just another signal processing box.

In what follows, $g = G(\mu, s)$ means that $g$ is a Gaussian measurement kernel with mean $\mu$ and standard deviation $s$.
 
\subsection{GFRFs of \festl~formulas}
\label{sec:comp rules}
 
\quad~~$\bullet$ We first consider the spectra of $\LTLhistorically_{[1,T_2]}p_g$ shown in Fig.~\ref{fig:historically}, with $g = G(0,0.04)$.
Increasing $T_2$ has a first-order effect (observed in $H_1$) of distributing the energy more uniformly over the range [0, 2.5]Hz, and suppressing less the higher frequencies.
$|H_2|$ on the other hand shows a more complex picture: while there's an increase of magnitude at higher values of $f_1$ or $f_2$ (top left and bottom right corners), the increase at higher $f_1$ \textit{and} $f_2$ is less marked.
 
 $\bullet$ Consider next the formula $ \LTLonce_{[0,T]}p$ for a fixed atom $p$, shown Fig.~\ref{fig:once T}.
 As $T$ increases, $H_1$ becomes more low-pass, but $H_2$ becomes more high-pass! This emphasizes the need to study all orders of the response, not only the linear first-order response.
 
$\bullet$ We now study the effect of using non-instantaneous measurements. 
Fig.~\ref{fig:once hist H1 varying s} shows the spectra $H_1$ of $\LTLonce_{[0,0.5]}p_g$ and $\LTLhist_{[0,0.5]}p_g$ where $g=G(0,s)$ for three values of $s$.
As $s$ increases, the Gaussian atom acts more like a low-pass filter (the measurement is lower resolution) and the overall formula has a more low-pass nature. 
By the same token, high-frequency noise is ignored by the formula and does not affect the monitoring verdict. 
Similarly, the $2^{nd}$-order spectra for these two formulas are shown in Fig.~\ref{fig:once hist H2 varying s} with increasing $s$.

In practice, the filter $f$ used in atomic propositions is imposed by the application and is derived from first-principles modeling of the physics of the system.
This Fourier analysis allows us to trace these effects quantitatively.

$\bullet$ Consider now the more complex formula $\formula_T$, which says that $a$ is true, preceded by $b$ $T$ units earlier, preceded by $c$ $T$ units earlier than that..
Here $a$, $b$ and $c$ are atoms with Gaussian filters of various widths.
\begin{equation}
\label{eq:phie}
\formula_{T} = a \land (\LTLonce_{[0,T]}(b \land \LTLonce_{[0,T]}c))
\end{equation}
It is not possible to read, from the formula, how the frequency responses of the various sub-monitors (for the sub-formulas) interact or cancel each other out. 
By contrast, Fig.~\ref{fig:phie diagram} shows the signal block diagram for computing this formula's Volterra series. 
This can be read as just another signal processing chain with non-linear filters.
On top of each box, we display the GFRF $H_1$ of the entire chain up to and including that box. 
This shows how the relevant frequencies evolve with the addition of each monitoring component).

\subsection{Compression's effect on monitoring}
\label{sec:compression}
We now illustrate what happens if attention is not paid to the frequency representation of temporal logic formulas when designing compression or filtering algorithms.
In Fig.~\ref{fig:comp effect once}, we The proposed method can be used as a signal processing tool for frequency domain analysis of temporal logic monitors but it is not supposed to replace the monitors themselves. Thus,it is more of an offline analysis tool that can be used to design filters which respect the  monitorability requirements. had shown how knowledge of the GFRFs allows us to perform monitoring-safe compression: even though the post-compression signal is markedly different from the original $x$, the monitoring results for the two signals were almost identical.

By contrast, in Fig.~\ref{fig:comp effect once bad}, we show the same signal but now compressed (via low-pass filtering) without regard to the GFRF or the monitored formula. 
The resulting monitoring result (in blue) is significantly affected, and the truth value (determined by checking where $\robf$ is positive or negative) is modified. 


\section{Conclusions}
\label{sec:conclusion}
We have presented a Fourier analysis of temporal logic using Volterra approximations of the robustness operators. 
Doing so has necessitated re-defining the semantics of atomic propositions using bounded-bandwidth filters, which led us to introduce the logic Bounded-Bandwidth STL. 
Using this analysis, it is possible to incorporate temporal logic monitors into signal processing chains as `just another' signal processing box.
 
Future work will seek to relax the constraints on the signal space.
In particular, we conjecture that it is possible to remove the compact-support requirement.
We will also seek more general approximations of the basic operators and extend them to Since.
Finally, the frequency representation in this paper presents a unifying formalism which we will leverage for optimal filter design that is monitoring-safe, i.e., that does not remove any signal content that is relevant to the output robustness signal.

\section*{Acknowledgments}
The authors would like to thank the anonymous reviewers for helpful comments regarding the definition of kernel space.

\bibliographystyle{splncs04}
\bibliography{iccps2017,hscc17,hscc2016,fainekos_bibrefs,hscc19,cav2019,colin_bib}

\begin{thebibliography}{10}
\providecommand{\url}[1]{\texttt{#1}}
\providecommand{\urlprefix}{URL }
\providecommand{\doi}[1]{https://doi.org/#1}

\bibitem{Abbas18ACCGeneralizedMTL}
{Abbas}, H., {Mangharam}, R.: Generalized robust mtl semantics for problems in
  cardiac electrophysiology. In: 2018 Annual American Control Conference (ACC).
  pp. 1592--1597 (2018)

\bibitem{Abbas19GeneralClasses}
Abbas, H., Pant, Y.V., Mangharam, R.: Temporal logic robustness for general
  signal classes. In: Proceedings of the 22nd ACM International Conference on
  Hybrid Systems: Computation and Control. p. 45–56. HSCC ’19, Association
  for Computing Machinery, New York, NY, USA (2019).
  \doi{10.1145/3302504.3311817}, \url{https://doi.org/10.1145/3302504.3311817}

\bibitem{BlanchardMathMethods}
Blanchard, P., Br{\"u}ning, E.: Mathematical Methods in Physics. Progress in
  Mathematical Physics, Birkh{\"a}user (2003)

\bibitem{Billings03MISOVolterra}
{Boaghe}, O.M., {Billings}, S.A.: Subharmonic oscillation modeling and {MISO}
  {V}olterra series. IEEE Transactions on Circuits and Systems I: Fundamental
  Theory and Applications  \textbf{50}(7),  877--884 (2003)

\bibitem{Boyd85FMVolterra}
Boyd, S., Chua, L.: Fading memory and the problem of approximating nonlinear
  operators with volterra series. IEEE Transactions on Circuits and Systems
  \textbf{CAS-32}(11) (Nov 1985)

\bibitem{Carassale10Volterra}
Carassale, L., Kareem, A.: Modeling nonlinear systems by volterra series.
  Journal of Engineering Mechanics  \textbf{136}(6) (Nov 2010)

\bibitem{Donze10STLRob}
Donz{\'e}, A., Maler, O.: Robust satisfaction of temporal logic over
  real-valued signals. In: Proceedings of the International Conference on
  Formal Modeling and Analysis of Timed Systems (2010)

\bibitem{Donze12TFL}
Donz{\'e}, A., Maler, O., Bartocci, E., Nickovic, D., Grosu, R., Smolka, S.: On
  temporal logic and signal processing. In: Chakraborty, S., Mukund, M. (eds.)
  Automated Technology for Verification and Analysis. pp. 92--106. Springer
  Berlin Heidelberg, Berlin, Heidelberg (2012)

\bibitem{FainekosP09tcs}
Fainekos, G., Pappas, G.: Robustness of temporal logic specifications for
  continuous-time signals. Theoretical Computer Science  (2009)

\bibitem{FlorackIP}
Florack, L.: Image Structure. Kluwer Academics (2013)

\bibitem{Ezio18AlgRV}
{Jaksi{\'c}}, S., {Bartocci}, E., {Grosu}, R., {Nickovi{\'c}}, D.: An algebraic
  framework for runtime verification. IEEE Transactions on Computer-Aided
  Design of Integrated Circuits and Systems  \textbf{37}(11),  2233--2243
  (2018)

\bibitem{Jing15VolterraBook}
Jing, X., Lang, Z.: Frequency Domain Analysis and Design of Nonlinear Systems
  based on {V}olterra Series Expansion. Springer (2015)

\bibitem{MalerN2004STL}
Maler, O., Nickovic, D.: Monitoring Temporal Properties of Continuous Signals.
  Springer Berlin Heidelberg (2004)

\bibitem{Pfander16OperatorSampling}
{Pfander}, G.E., {Walnut}, D.F.: Sampling and reconstruction of operators. IEEE
  Transactions on Information Theory  \textbf{62}(1),  435--458 (2016)

\bibitem{Alena16LTI}
Rodionova, A., Bartocci, E., Nickovic, D., Grosu, R.: Temporal logic as
  filtering. In: Proceedings of the 19th International Conference on Hybrid
  Systems: Computation and Control. p. 11–20. HSCC ’16, Association for
  Computing Machinery, New York, NY, USA (2016). \doi{10.1145/2883817.2883839},
  \url{https://doi.org/10.1145/2883817.2883839}

\bibitem{Sanberg97Myopic}
{Sandberg}, I.W., {Xu}, L.: Uniform approximation of multidimensional myopic
  maps. IEEE Transactions on Circuits and Systems I: Fundamental Theory and
  Applications  \textbf{44}(6),  477--500 (1997)

\bibitem{Ezio18SCL}
Silvetti, S., Nenzi, L., Bartocci, E., Bortolussi, L.: Signal convolution
  logic. In: Lahiri, S.K., Wang, C. (eds.) Automated Technology for
  Verification and Analysis. pp. 267--283. Springer International Publishing,
  Cham (2018)

\bibitem{Slepian76OnBandwidth}
Slepian, D.: On bandwidth. In: Proceedings of the IEEE. vol.~64 (March 1976)

\bibitem{Mitra96QuadraticVolterra}
{Thurnhofer}, S., {Mitra}, S.K.: A general framework for quadratic volterra
  filters for edge enhancement. IEEE Transactions on Image Processing
  \textbf{5}(6),  950--963 (June 1996). \doi{10.1109/83.503911}

\end{thebibliography}
\end{document}